  \providecommand\BibTeX{{%
    \normalfont B\kern-0.5em{\scshape i\kern-0.25em b}\kern-0.8em\TeX}}}
\theoremstyle{definition}
\DeclareMathOperator{\EX}{\mathbb{E}} 
\DeclarePairedDelimiter\ceil{\lceil}{\rceil}
\DeclarePairedDelimiter\floor{\lfloor}{\rfloor}
\begin{document}

\title{Sub-trajectory Similarity Join with Obfuscation}


\author{Yanchuan Chang}
\affiliation{%
  \institution{The University of Melbourne}
  \country{Australia}
}
\email{yanchuanc@student.unimelb.edu.au}

\author{Jianzhong Qi}
\affiliation{%
  \institution{The University of Melbourne}
  \country{Australia}
}
\email{jianzhong.qi@unimelb.edu.au}

\author{Egemen Tanin}
\affiliation{%
  \institution{The University of Melbourne}
  \country{Australia}
 }
\email{etanin@unimelb.edu.au}

\author{Xingjun Ma}
\affiliation{%
  \institution{Deakin University}
  \country{Australia}
}
\email{daniel.ma@deakin.edu.au}

\author{Hanan Samet}
\affiliation{%
  \institution{University of Maryland}
  \country{USA}
 }
\email{hjs@cs.umd.edu}

\renewcommand{\shortauthors}{Yanchuan Chang et al.}

\begin{abstract}
User trajectory data is becoming increasingly accessible due to the prevalence of GPS-equipped devices such as smartphones. Many existing studies focus on querying trajectories that are similar to each other in their entirety. We observe that trajectories partially similar to each other contain useful information about users' travel patterns which should not be ignored. Such partially similar trajectories are critical in applications such as epidemic contact tracing. We thus propose to query trajectories that are within a given distance range from each other for a given period of time. We formulate this problem as a sub-trajectory similarity join query named as the \emph{STS-Join}. We further propose a distributed index structure and a query algorithm for STS-Join, where users retain their raw location data and only send obfuscated trajectories to a server for query processing. This helps preserve user location privacy which is vital when dealing with such data. Theoretical analysis and experiments on real data confirm the effectiveness and the efficiency of our proposed index structure and query algorithm.



\end{abstract}


\begin{CCSXML}
<ccs2012>
   <concept>
       <concept_id>10002951.10003227.10003236</concept_id>
       <concept_desc>Information systems~Spatial-temporal systems</concept_desc>
       <concept_significance>300</concept_significance>
       </concept>
 </ccs2012>
\end{CCSXML}

\ccsdesc[300]{Information systems~Spatial-temporal systems}
\keywords{Trajectory join, trajectory similarity, spatio-temporal indexing}




\maketitle

\section{Introduction}

Trajectory data is being captured by GPS-equipped devices such as smartphones. Such data can be used to query people's travel patterns. In this paper, we are interested in a type of query named as the \emph{trajectory join} query, which returns all trajectory pairs from two trajectory sets that satisfy a given join predicate, e.g., finding people with similar commute routes for ride-sharing matches. 
Most existing trajectory join queries~\cite{join_dita, simi_shuoshang_join, join_dison} compute trajectories that are similar in their entirety, i.e., their join predicates are defined on the full trajectories. 

We observe that trajectories that are partially similar to each other also offer useful information and should not be ignored. 
Such partially similar trajectories are gaining importance in applications such as contact tracing for managing epidemics, e.g., to find people in close contact with confirmed cases of COVID-19 for a duration of over 15 minutes\footnote{www.dhhs.vic.gov.au/victorian-public-coronavirus-disease-covid-19}. Another example is to compute partially similar trajectories to find matches to form \emph{multi-hop} goods delivery or car-pooling arrangements that allow transits~\cite{multihop}.

Motivated by these applications, we propose a trajectory join query that, given two sets of trajectories, computes every pair of trajectories that are within a distance threshold  (e.g., 5 meters) lasting for a certain time span (e.g., 15 minutes). 
Our query is defined on \underline{\emph{s}}ub-\underline{\emph{t}}rajectory \underline{\emph{s}}imilarity and hence is 
named as the \emph{STS-Join}.

\begin{figure}[htp]
  \centering
  \includegraphics[width=0.37\textwidth]{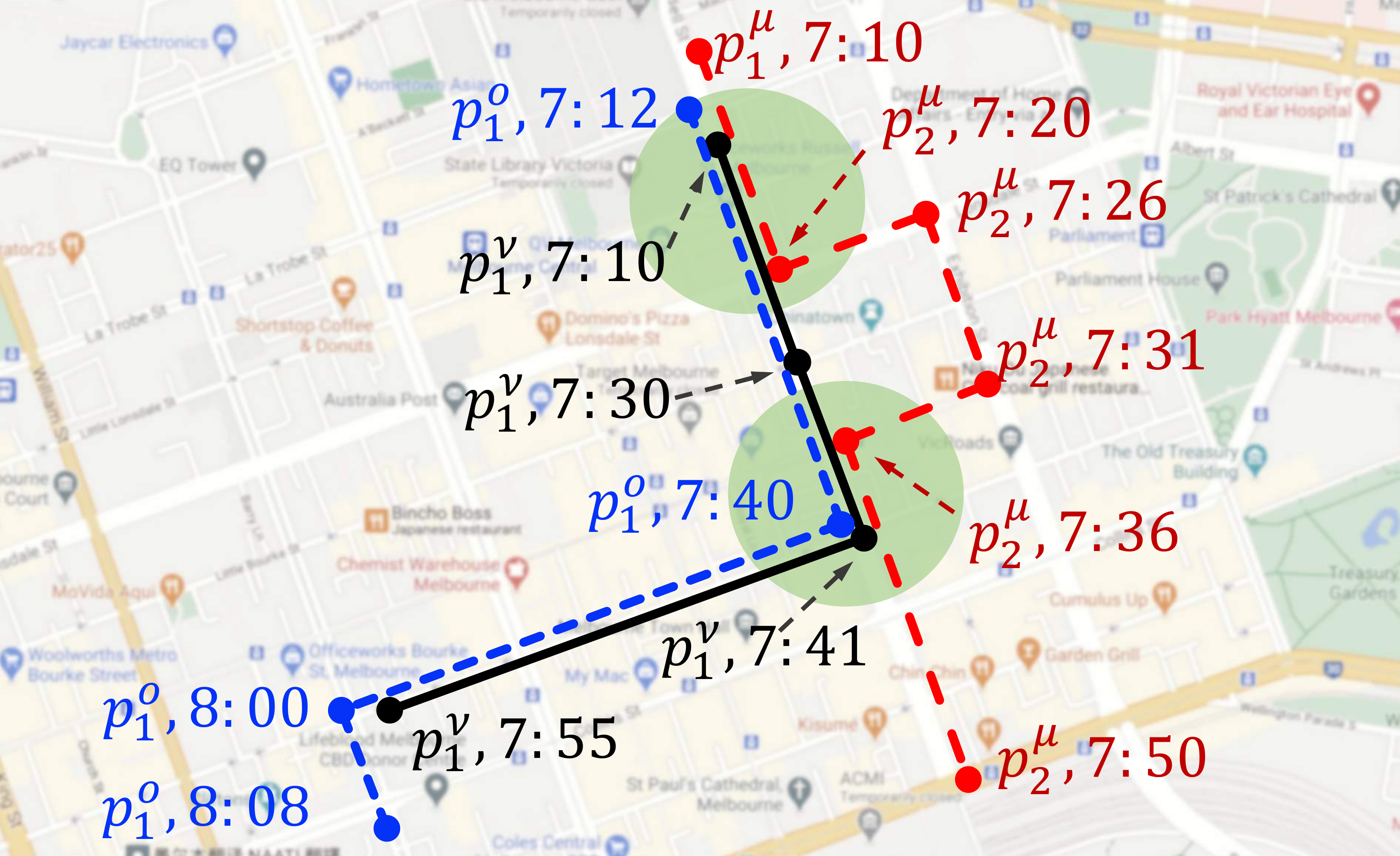}
  \caption{Trajectory join examples}\label{fig:intro_map_example}
\end{figure}

Fig.~\ref{fig:intro_map_example} illustrates our join predicate. There are three trajectories in different colors. The sampled points on the trajectories are marked by the dots and are labelled with their time. Existing studies on full trajectory similarity may return the black and the blue trajectories as a similar pair, because they are very close in both space and time. The existing studies may lose sight of similarity of portions. Within the two green colored areas, the red trajectory is close to the blue and the black ones, respectively, even though their entire movements are quite dissimilar. STS-Join aims to find out both partially and fully similar trajectories.

We assume a distributed (i.e., client-server) system  architecture to process STS-Join queries. 
Each client device (e.g., a user's mobile phone) stores a user's own accurate trajectories, while the server only stores a modified version of the trajectories for privacy considerations. To showcase the feasibility of processing STS-Join queries over modified trajectories, we consider  trajectory obfuscation. A user  trajectory is obfuscated automatically where every sampled point on the trajectory is shifted  with a bounded-distance noise before the trajectory is sent to the server. All users' obfuscated trajectories are then maintained on the server in a spatio-temporal index for fast query retrieval. We note that there are limitations in using only obfuscation for trajectory privacy protection, and more advanced techniques exist in the literature such as~\emph{geo-indistinguishability}~\cite{dp_geo_indistinguish}. However, the core theme and contribution of our study is \emph{not} to propose another privacy protection technique. Thus, we just use obfuscation for its simplicity and leave more advanced privacy protection techniques for future studies. 



We propose a query algorithm for STS-Join based on a traversal over our index structure. To take advantage of the characteristic that segments on a trajectory are connected sequentially, we design a backtracking-based method to reduce node access of the traversal. It can avoid querying each segment individually. Note that this query method is applicable to any spatial indices  that divide the space in a non-overlapping manner.
Further, we derive an upper bound of the similarity between  
a query trajectory and an original user trajectory based on the similarity between the query trajectory and the corresponding obfuscated user trajectory. This enables additional pruning on the server, which reduces the number of query trajectories to be sent to the clients for final similarity checks and saves communication costs.


To sum up, we make the following contributions:
\begin{enumerate}
    \item We define a new trajectory join predicate STS-Join that focuses on sub-trajectory similarity. Two similar trajectories can be very different, but they can contain parts that are related in both space and time.
    This similarity is especially applicable to 
    bioinformatic datasets that are used for contact tracing and in computational transport science with  shared-economy-based transportation systems. 
    \item We propose an efficient  spatio-temporal index to manage trajectory data dynamically and a backtracking-based algorithm to process STS-Join queries. We further propose  a similarity upper bound that is computed on obfuscated trajectories to enable data pruning and more efficient STS-Join processing. 
    Trajectories in our index are not required to be accurate, but our join results are still correct.
    \item We conduct experiments on real datasets. The proposed join algorithm outperforms adapted state-of-the-art methods by up to three orders of magnitude in running time.
\end{enumerate}
\section{Related Work} \label{sec:relatedwork}

Our study is related to studies on trajectory similarity measurements, trajectory join queries, and trajectory privacy.

\subsection{Trajectory Similarity Measurement}
Most trajectory similarity measurements are either spatial distance based or spatio-temporal distance based.

Spatial-based measurements~\cite{join_dita, simi_lcss_example, simi_edr, simi_ftw, simi_fastdtw} focus on the spatial distance between two trajectories.
They aggregate distance between aligned point pairs form two trajectories, such as \emph{dynamic time warping} (DTW)\cite{join_dita} , \emph{longest common subsequence}  (LCSS)\cite{simi_lcss_example} and \emph{edit distance on real sequence} (EDR)\cite{simi_edr}. 

Spatio-temporal-based measurements~\cite{simi_lcss, simi_stEDR, simi_STeuclidean, simi_shuoshang_join,DBLP:conf/icde/WuXCKNK16} consider the distance in both space and time. 
For example, 
LCSS and EDR have been extended to incorporate temporal thresholds~\cite{simi_lcss,simi_stEDR}. 
Nanni and Pedreschi~\cite{simi_STeuclidean} assume a constant moving speed on each segment of a trajectory, which is computed as the length of the segment divided by its time span. They then compute the distance between two trajectories as the average distance between two users who travel along the two trajectories with the constant speed of each segment. 
Shang et al.~\cite{simi_shuoshang_join} sum point-to-trajectory distances from one trajectory to the other as the distance between two trajectories, where the summed distance is a weighted sum of the spatial and temporal distances. Wu et al.~\cite{DBLP:conf/icde/WuXCKNK16} consider points from two trajectories to be compatible if their spatial distance over time difference is within a velocity threshold. 

These measurements compute similarity based on full trajectories which are different from our sub-trajectory-based metric. 

\subsection{Trajectory Join}\label{sec:relate_join}

Studies leverage distributed structures to join trajectories, such as 
\emph{DITA}~\cite{join_dita} and \emph{DISON}~\cite{join_dison}. 
DITA supports a variety of trajectory distance functions based on full trajectories, while DISON measures trajectory similarity by counting the length of common road segments among the trajectories. 
Our trajectory join procedure supports distributed processing in two senses: (i) our join refinement procedure is distributed to client machines, and (ii) our index is based on non-overlapping space/time partitioning, which can be easily distributed.

There are also studies on sub-trajectory join~\cite{join_subtraj_edbt2020, join_continuousdistributed,join_CSTJ}.  
\emph{CSTJ}~\cite{join_CSTJ} joins trajectories online. It returns two trajectories once they stay as close-distance pairs 
for a given time threshold. Its distance measure is based on points in trajectories rather than segments as in our study. 
\emph{DTJr}~\cite{join_continuousdistributed} also measures point distance. It finds the \emph{longest} sub-trajectory pair satisfying both spatial and temporal distance thresholds, while we find \emph{all} pairs that satisfy a  sub-trajectory similarity metric. 

\emph{ALSTJ}~\cite{join_subtraj_edbt2020} is the closest work to ours. 
Its similarity metric is based on the spatial span of sub-trajectory pairs that are within a given distance threshold. 
The main differences between ALSTJ and our STS-Join are in three aspects: 
(i)~ALSTJ does not consider the temporal factor and may join trajectories generated at different times,  
while STS-Join requires the trajectories to be close in both space and time so as to be joined; (ii)~ALSTJ may have false negatives in its query result due to its trajectory simplification procedure, while our STS-Join guarantees accurate query results; and 
(iii)~ALSTJ does not consider user privacy while we do. 

\subsection{Trajectory Privacy}\label{sec:related_privacy}
Trajectory privacy has been studied extensively in the last decade~\cite{privacy_dummy, privacy_dummy2,dp_geo_indistinguish, dp_traj_lap,privacy_egemen2, privacy_traj_semantic}. 
For example, a study~\cite{privacy_dummy2} introduces \emph{position dummy} to hide a user’s location by mixing it with fake locations. 
Another study~\cite{k_anonymity_trajectory2} extends \emph{$k$-anonymity} to trajectory data.
Studies~\cite{dp_geo_indistinguish, dp_traj_lap} leverage \emph{differential privacy} for release of trajectory data.
\emph{Geo-indistinguishability}~\cite{dp_geo_indistinguish} generalizes differential privacy to user location data. 
\emph{SDD}~\cite{dp_traj_lap} applies the exponential-based randomized mechanism to trajectory data by sampling a rational distance and direction with noises between locations in the trajectory. 
There are also studies focusing on semantic privacy of trajectories~\cite{privacy_egemen2, privacy_traj_semantic}. Monreale et al.~\cite{privacy_traj_semantic} present a place taxonomy based method to preserve the trajectory semantic privacy. It guarantees that the probability of inferring the sensitive stops of a user is below a threshold. 
Naghizade et al.~\cite{privacy_egemen2} propose an algorithm to protect the semantic information of a trajectory by substituting sensitive stops of a trajectory with less sensitive ones. 

As mentioned earlier, our aim is \emph{not} to propose a new privacy protection scheme but to show that it is feasible to process STS-Join queries with a client-server architecture where the server only stores a modified version of the user trajectories. We use obfuscation for its simplicity. Other privacy protection methods (e.g., position dummy) can be applied in our STS-Join if the modification on the trajectory points can be bounded by a distance threshold, which helps guarantee no false negative query results.
\setlength\tabcolsep{4pt} 
\begin{table}[htp]
\centering
\caption{Frequently Used Symbols}\label{tab:join_symbol}
\begin{tabular}{c|l}
\hline
\textbf{Symbol} & \textbf{Description} \\ \hline \hline
$\mathcal{D}_p$ & An existing trajectory set \\ \hline 
$\mathcal{D}_q$ & A query trajectory set \\ \hline 
$\delta_d$, $\delta_t$ & Trajectory join distance and time thresholds \\ \hline
$\theta_{sp}$ & Simplification threshold \\ \hline
$\theta_{ob}$ & Maximum obfuscated shifting distance \\ \hline
$\mathcal{T}$ & A trajectory \\ \hline
$p_i$ & A point in trajectory \\ \hline
$s_i (\overline{p_i, p_{i+1}})$ & A segment in trajectory \\ \hline
$cdd(s^\mu_i, s^\nu_j)$ & The close-distance duration of $s^\mu_i$ and  $s^\nu_j$  \\ \hline
$cdds(\mathcal{T}_\mu, \mathcal{T}_\nu)$ & The CDD similarity between  $\mathcal{T}_\mu$ and $\mathcal{T}_\nu$ \\ \hline
\end{tabular}
\end{table}

\section{Preliminaries} \label{sec:problemdef}

Given two sets of trajectories  $\mathcal{D}_p$ (\emph{known trajectory set}) and 
$\mathcal{D}_q$ (\emph{query trajectory set}), STS-Join returns pairs
of trajectories $(\mathcal{T}_\mu, \mathcal{T}_\nu) \in \mathcal{D}_p \times \mathcal{D}_q$ with sub-trajectories within $\delta_d$ distance for at least $\delta_t$ time, where $\delta_d$ and $\delta_t$ are query parameters. 
Below, we present a few basic concepts and formulate STS-Join. We list the frequently used symbols in Table~\ref{tab:join_symbol}.


A trajectory $\mathcal{T}$ is formed by a sequence of $|\mathcal{T}|$ sampled points $[p_1, p_2, \ldots, p_{|\mathcal{T}|}]$. 
A point $p_i$ is a triple $\langle x_i, y_i, t_i \rangle$: $p_i$ was generated by a user at location $(x_i,  y_i)$ (in Euclidean space) at time $t_i$. Two adjacent points $p_i$ and $p_{i+1}$ form a segment $s_i = \overline{p_i, p_{i+1}} \in \mathcal{T}$. 


Following previous studies~\cite{join_subtraj_edbt2020, simi_DISSIM}, we consider a constant speed on each trajectory segment. This is valid as real-world trajectories have high sampling rates, e.g., 4.5 seconds in our experiments. The speed may not vary much in such short time frames.
Such a constant-speed setting enables computing a user's location at any time $t \in [t_i, t_{i+1}]$, denoted as 
$(\mathcal{X}_{\mathcal{T}}(t), \mathcal{Y}_{\mathcal{T}}(t))$, 
 given a trajectory $\mathcal{T}$, by linear interpolation:
\begin{equation} \label{eq:XT(t)}
    \big(\mathcal{X}_{\mathcal{T}}(t), \mathcal{Y}_{\mathcal{T}}(t) \big) = \big(x_i+\frac{t-t_i}{t_{i+1}-t_i} (x_{i+1} - x_i), y_i+\frac{t-t_i}{t_{i+1}-t_i} (y_{i+1} - y_i) \big)
\end{equation}
In Fig.~\ref{fig:cdd}, there are two trajectories
$\mathcal{T}_\mu = [p^\mu_1, p^\mu_2, p^\mu_3]$ (the black polyline) and $\mathcal{T}_\nu = [p^\nu_1, p^\nu_2, p^\nu_3]$ (the red polyline).
The solid points in the trajectories represent the sample points, and they are labeled with their timestamps, e.g., $p^\mu_1$ is recorded at 7:00. Using Equation~\ref{eq:XT(t)}, we can derive a user's location on $\mathcal{T}_\mu$ (or $\mathcal{T}_\nu$), e.g., at 7:03, the user should be at $p^\mu_{1'}$. 

Now we can measure the distance between $\mathcal{T}_\mu$ and $\mathcal{T}_\nu$, denoted by $dist(\mathcal{T}_\mu, \mathcal{T}_\nu)$, at any time $t$ 
as the Euclidean distance.
We call this distance the \emph{point distance}.
STS-Join computes the time duration where this distance is within a given threshold $\delta_d$.
\begin{figure}[htp]
  \centering
  \includegraphics[width=0.35\textwidth]{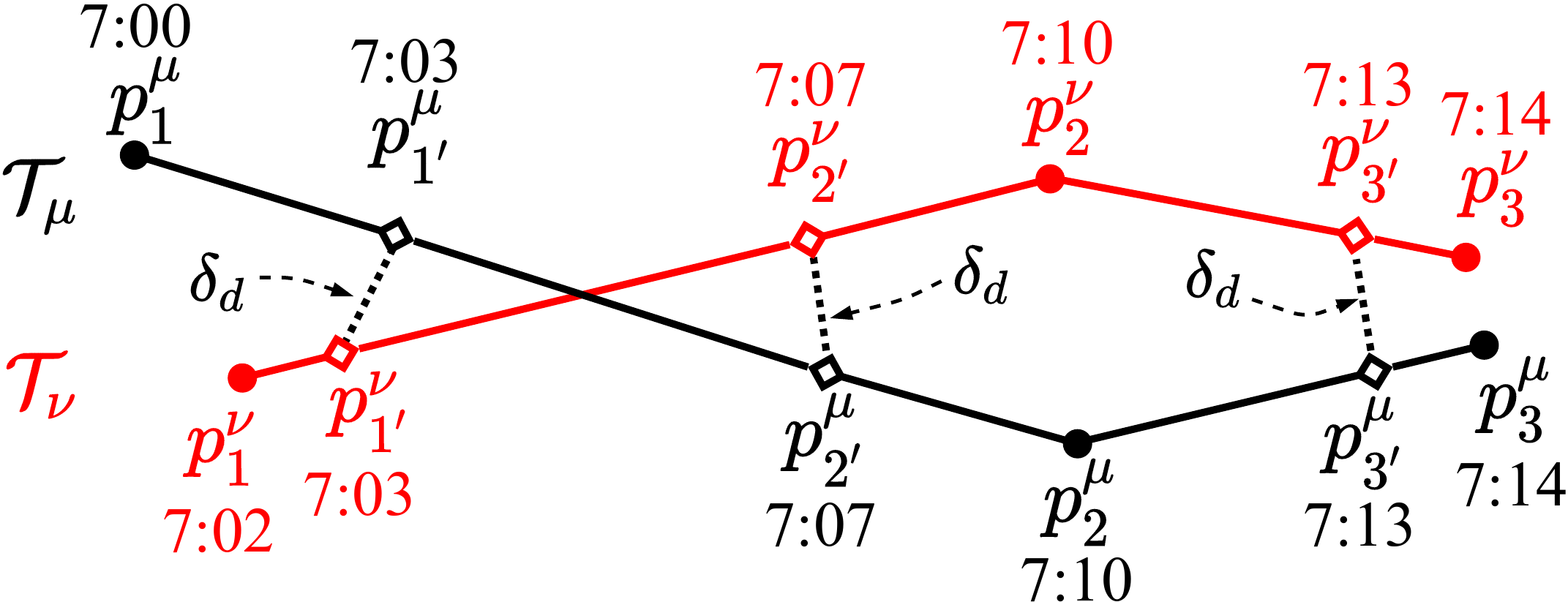}
  \caption{Example of trajectories}\label{fig:cdd}
\end{figure}

Trajectories may be generated at different time spans and with different sampling rates. To compute the point distance between $\mathcal{T}_\mu$ and $\mathcal{T}_\nu$, we need to first define the overlapping time span of segment pairs.
Let $ot(s^\mu_i, s^\nu_j)$ be the overlapping time span between $s^\mu_i$ and $s^\nu_j$, i.e., $[t^\mu_i, t^\mu_{i+1}] \cap [t^\nu_j, t^\nu_{j+1}]$, where $(s^\mu_i, s^\nu_j) \in \mathcal{T}_\mu \times \mathcal{T}_\nu$. Denote the length of the overlapping time span as $|ot(s^\mu_i, s^\nu_j)|$.
Then, in Fig.~\ref{fig:cdd}, $ot(s^\mu_1, s^\nu_1)$ is [7:02, 7:10], and the length is 8 minutes.

\textbf{Close-distance duration.} 
When $ot(s^\mu_i, s^\nu_j) \neq \emptyset$, we call $s^\mu_i$ and $s^\nu_j$ two \emph{time-overlapping} segments. Given such segments, we compute the time range $[t^{\mu\nu}_{i, j\vdash}, t^{\mu\nu}_{i, j\dashv}]$ 
where $dist\big(\mathcal{T}_\mu, \mathcal{T}_\nu) \leqslant \delta_d$, i.e.,
\begin{equation}  \label{eq:sqrt(XT)(t)}
\begin{aligned}
    \sqrt{\big((\mathcal{X}_{\mathcal{T_\mu}}(t) - \mathcal{X}_{\mathcal{T_\nu}}(t)\big)^2 + \big((\mathcal{Y}_{\mathcal{T_\mu}}(t) - \mathcal{Y}_{\mathcal{T_\nu}}(t)\big)^2 } \leqslant \delta_d
\end{aligned}
\end{equation}


To solve this inequality, we expand Equation~\ref{eq:sqrt(XT)(t)} with Equation~\ref{eq:XT(t)} and compute the square of both sides of the inequality:
\begin{equation} \label{eq:distance_to_t}
\begin{aligned}[b]
    \delta_d^2 & \geqslant \big((\mathcal{X}_{\mathcal{T_\mu}}(t) - \mathcal{X}_{\mathcal{T_\nu}}(t)\big)^2 + \big((\mathcal{Y}_{\mathcal{T_\mu}}(t) - \mathcal{Y}_{\mathcal{T_\nu}}(t)\big)^2  \\
    & =  (k_{x}^{\mu} \cdot t + b_{x}^{\mu} - k_{x}^{\nu} \cdot t - b_{x}^{\nu})^2\\
    & \;\;\; + (k_{y}^{\mu}\cdot t + b_{y}^{\mu} - k_{y}^{\nu} \cdot t - b_{y}^{\nu})^2 \\
    & = [(k_{x}^{\mu} - k_{x}^{\nu})^2 + (k_{y}^{\mu} - k_{y}^{\nu})^2]t^2 \\
    & \;\;\; + 2[(k_{x}^{\mu} - k_{x}^{\nu})(b_{x}^{\mu} - b_{x}^{\nu}) 
 +(k_{y}^{\mu} - k_{y}^{\nu})(b_{y}^{\mu} - b_{y}^{\nu})] t \\
    & \;\;\; + (b_{x}^{\mu} - b_{x}^{\nu})^2 + (b_{y}^{\mu} - b_{y}^{\nu})^2, \text{where $t \in ot(s^\mu_i, s^\nu_j)$ and} \\
    &   \begin{cases}
            k_{x}^{\mu} = \frac{x_{i+1}^{\mu} - x_{i}^{\mu}}{t_{i+1}^{\mu} - t_{i}^{\mu}},& b_{x}^{\mu} =\frac{x_{i}^{\mu} t_{i+1}^{\mu} - x_{i+1}^{\mu}t_{i}^{\mu}}{t_{i+1}^{\mu} - t_{i}^{\mu}} \\
            k_{x}^{\nu} = \frac{x_{j+1}^{\nu} - x_{j}^{\nu}}{t_{j+1}^{\nu} - t_{j}^{\nu}},& b_{x}^{\nu} =\frac{x_{j}^{\nu} t_{j+1}^{\nu} - x_{j+1}^{\nu}t_{j}^{\nu}}{t_{j+1}^{\nu} - t_{j}^{\nu}} \\
            k_{y}^{\mu} = \frac{y_{i+1}^{\mu} - y_{i}^{\mu}}{t_{i+1}^{\mu} - t_{i}^{\mu}},& b_{y}^{\mu} =\frac{y_{i}^{\mu} t_{i+1}^{\mu} - y_{i+1}^{\mu}t_{i}^{\mu}}{t_{i+1}^{\mu} - t_{i}^{\mu}} \\
            k_{y}^{\nu} = \frac{y_{j+1}^{\nu} - y_{j}^{\nu}}{t_{j+1}^{\nu} - t_{j}^{\nu}},& b_{y}^{\nu} =\frac{y_{j}^{\nu} t_{j+1}^{\nu} - y_{j+1}^{\nu}t_{j}^{\nu}}{t_{j+1}^{\nu} - t_{j}^{\nu}}
        \end{cases}
\end{aligned}
\end{equation}

The resultant quadratic inequality has just one variable $t$. It can be solved straightforwardly by letting the inequality be equal and computing the roots for the equation with the quadratic formula. We omit the detailed computation for conciseness. 

In Fig.~\ref{fig:cdd}, the distance threshold $\delta_d$ is represented 
by the dotted lines. The distance between the first segments of the two trajectories first decreases and then increases. At time $t =$ 7:03 and 7:07, 
$dist\big(\mathcal{T}_\mu, \mathcal{T}_\nu) = \delta_d$.
which yields the first time range [7:03, 7:07] that satisfies $\delta_d$. 
The distance between the second segments of the two trajectories keeps decreasing, which reaches $\delta_d$ at 7:13. 
This yields the second time range [7:13, 7:14] that satisfies $\delta_d$. 

We call the length of $[t^{\mu\nu}_{i, j\vdash}, t^{\mu\nu}_{i, j\dashv}]$ 
the \emph{close-distance duration} (CDD), denoted as  $cdd(s^\mu_i, s^\nu_j) = t^{\mu\nu}_{i, j\dashv} - t^{\mu\nu}_{i, j\vdash}$.
We define the similarity between two trajectories as their total CDD across all segments. 

\begin{definition}[Close-distance duration similarity]\label{def:cd}
Given a point distance threshold $\delta_d$, the \emph{close-distance duration similarity} (CDDS) of two trajectories $\mathcal{T}_\mu$ and $\mathcal{T}_\nu$, $cdds(\mathcal{T}_\mu, \mathcal{T}_\nu)$, is the sum of 
$cdd(s^\mu_i, s^\nu_j)$ of every time-overlapping segment pair $(s^\mu_i, s^\nu_j) \in \mathcal{T}_\mu \times \mathcal{T}_\nu$.
\begin{equation}
\begin{aligned}
    cdds(\mathcal{T}_\mu, \mathcal{T}_\nu) = \sum_{1 \le i\le|\mathcal{T}_\mu|, 1 \le j\le|\mathcal{T}_\nu|, ot(s^\mu_i, s^\nu_j) \neq \emptyset} {cdd(s^\mu_i, s^\nu_j)}
\end{aligned}
\end{equation}
\end{definition}

CDDS sums up the duration of all close segments including the partial ones. This differs from existing trajectory similarity metrics that require the full  trajectories to be close. 
In Fig.~\ref{fig:cdd}, $cdds(\mathcal{T}_\mu, \mathcal{T}_\nu)$ equals to  the total length of the two time ranges [7:03, 7:07] and [7:13, 7:14], i.e., 5 minutes.

\textbf{Problem definition.} 
Now we can formulate our STS-Join.  
\begin{definition}[STS-Join query]  
Given two trajectory datasets $\mathcal{D}_p$ and $\mathcal{D}_q$, a point distance threshold $\delta_d$, and a close-distance duration similarity threshold $\delta_t$, \emph{STS-Join} returns every trajectory pair 
$(\mathcal{T}_\mu, \mathcal{T}_\nu) \in \mathcal{D}_p \times \mathcal{D}_q$ such that $cdds(\mathcal{T}_\mu, \mathcal{T}_\nu) \geqslant \delta_t$. 
\end{definition}

\section{Index Structure}\label{sec:index}

We assume set $\mathcal{D}_p$ to be known (e.g., user trajectory dumps) and set $\mathcal{D}_q$  to be given at query time (e.g., trajectories of newly confirmed COVID-19 cases). 
We build an index named \emph{STS-Index} over $\mathcal{D}_p$ such that $\mathcal{D}_p$ can be STS-Joined with $\mathcal{D}_q$ efficiently. 
We use a client-server architecture to protect  location privacy. 
On a client, a user's trajectories are stored in their original form, which are obfuscated and sent to the server. The obfuscated trajectories from all clients together are indexed in a tree structure on the server that considers both their spatial and temporal features. 
Next, we detail the index structures on the clients and the server, respectively. 

\subsection{On Client Side}
A user's original trajectories are stored on the client side. We \emph{simplify} and \emph{obfuscate} an original trajectory before sending it (together with the client ID and a local trajectory ID) to the server in order to reduce the communication and protect user's privacy. 
We index the trajectories by their local IDs (e.g., using a sorted array or a B-tree) for fast retrieval at the refinement stage of query processing.

\textbf{Trajectory simplification.} 
First, we simplify an original trajectory $\mathcal{T}_\mu$ by reducing 
the number of sampled points. This reduces the storage space and improves the query efficiency later. Our  simplification algorithm is adapted from the \emph{Douglas–Peucker algorithm}~(DP)~\cite{simp_dp}. 
The native DP algorithm ignores the temporal dimension. Consider two sampled points $p^\mu_i$ and $p^\mu_{i+k} (k > 1)$ on $\mathcal{T}_\mu$. For all other sampled points between $p^\mu_i$  and $p^\mu_{i+k}$, if their perpendicular distances to the segment between $p^\mu_i$ and $p^\mu_{i+k}$ are within a simplification threshold $\theta_{sp}$ (an empirical parameter), then these points are all removed from $\mathcal{T}_\mu$ by DP. 

In our case, since we interpolate user locations on the trajectory segments, we require the user location on the simplified segment $\overline{p^\mu_i, p^\mu_{i+k}}$ to be within $\theta_{sp}$ distance from that on the original segments at every time point $t \in [t^\mu_i, t^\mu_{i+k}]$.  This guarantees no false negatives in STS-Join over the simplified trajectories. 
Fig.~\ref{fig:simplification} shows an example. The original (black) trajectory $\mathcal{T}_\mu$  has three segments, which is simplified to just one (red) segment  $\overline{{p^{\widetilde{\mu}}_1},{p^{\widetilde{\mu}}_4}}$. 
We compute $p^{\widetilde{\mu}}_2$ and $p^{\widetilde{\mu}}_3$ on $\overline{{p^{\widetilde{\mu}}_1},{p^{\widetilde{\mu}}_4}}$ at times $t^\mu_2$ and $t^\mu_3$  (i.e., the time points of $p^\mu_2$ and $p^\mu_3$), respectively. 
To ensure valid simplification, the distance between $p^\mu_2$ and $p^{\widetilde{\mu}}_2$ (i.e., $d_2$) and that between 
$p^\mu_3$ and $p^{\widetilde{\mu}}_3$ (i.e., $d_3$) must both be within $\theta_{sp}$. 
This contrasts to the native DP that examines the perpendicular distances of 
$p^\mu_2$ and $p^\mu_3$ (i.e., $d^\perp_2$ and $d^\perp_3$), which are shorter and may lead to false negatives at query processing.  

\begin{figure}[htp]
  \centering
  \includegraphics[width=0.27\textwidth]{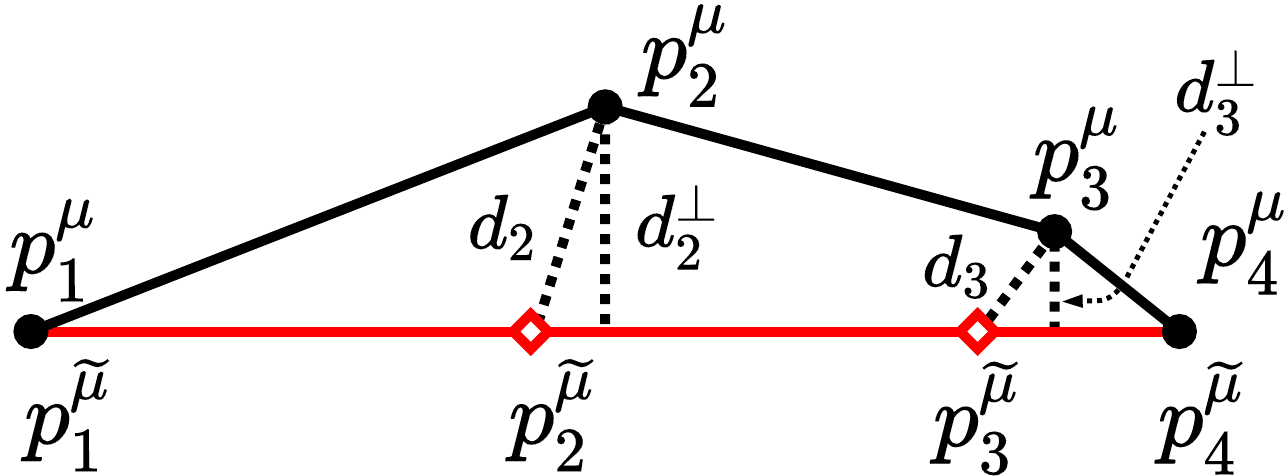}
  \caption{Example of trajectory simplification}\label{fig:simplification}
\end{figure}

\textbf{Trajectory obfuscation.} 
Our simplified trajectories retain a subset of the trajectory points. 
To protect privacy, we further adapt the \emph{bounded Laplace mechanism} (i.e., an algorithm) to obfuscate the simplified trajectories  as inspired by previous studies~\cite{dp_first, dp_bounded_laplace}.

The bounded Laplace mechanism adds a noise from the bounded Laplace distribution to the output of a (query) function. In our problem, we add a bounded noise to each dimension of a point on a simplified trajectory to protect users' location privacy.
The probability distribution function (PDF) of the added noise is~\cite{dp_bounded_laplace}:
\begin{equation} \label{eq:bounded_lap_pdf}
\begin{aligned}
    Pr(x) = \lambda \cdot \frac{1}{2b} \cdot \exp(\frac{-|x|}{b}), x \in [-\theta_{ob}, \theta_{ob}]
\end{aligned}
\end{equation}
where $\lambda$ is a constant, $b$ is the bias of the distribution, $\theta_{ob} \in \mathbb{R}^+$ is the obfuscated distance threshold.
Since the domain of the distribution is bounded in $[-\theta_{ob}, \theta_{ob}]$, the integral of PDF should be 1 for $x \in [-\theta_{ob}, \theta_{ob}]$. 
This yields the value of $\lambda$: 
\begin{equation}\label{eq:lambda}
\begin{aligned}
\lambda
    & = (\int_{-\theta_{ob}}^{\theta_{ob}} \frac{1}{2b} \cdot \exp(\frac{-|x|}{b}) dx)^{-1} = (1 - \exp(-\frac{\theta_{ob}}{b}))^{-1}
\end{aligned}
\end{equation}
We then leverage the \emph{inverse cumulative distribution function} to generate random noises that satisfy the PDF. 
Firstly, we derive the CDF from the PDF by computing the integral of the PDF for $x<0$ and $x\geqslant0$ respectively:
\begin{equation}\label{eq:CDF}
\begin{aligned}
    F(x) = \begin{cases}
        \frac{\lambda}{2} \cdot (\exp(\frac{x}{b})-\exp(\frac{-\theta_{ob}}{b})), & x <0 \\
        \frac{1}{2}+\frac{\lambda}{2}\cdot(1-\exp(\frac{-x}{b})), & x \geqslant 0
    \end{cases}
\end{aligned}
\end{equation}
Then, we can obtain the inverse CDF from Equation~\ref{eq:CDF}:
\begin{equation}
    x = \begin{cases}
        -b \cdot \ln(1 + \lambda^{-1} - 2\lambda^{-1}y), & y \in (0, 0.5] \\
         b \cdot\ln(1 - \lambda^{-1} + 2\lambda^{-1}y), & y \in (0.5, 1)
    \end{cases}
\end{equation}
Here, variable $y$ is a random number in $(0, 1)$. We generate $y$ and use it to obtain noise $x$, which is then added to each point coordinate of the simplified trajectory. Parameter $\lambda$ is determined by $b$ and $\theta_{ob}$ (Equation~\ref{eq:lambda}), while $b$ is the distribution bias.

\subsection{On Server Side} \label{sec:build_index}
The obfuscated trajectories are stored in a tree structure on the server that indexes both the spatial and temporal features of the trajectories (in the form of segments). 
As Fig.~\ref{fig:index_structure} shows, the top levels of the structure together can be seen as a B-tree  that indexes time intervals hierarchically, and the node capacity (which is a system parameter) of the example is 20. The time span of an entry in the leaf nodes is 3 minutes which is determined by the time span of the trajectory segments, and every entry points to a spatial index for the trajectory segments in that interval. If a segment spans across the intervals of multiple entries (which is rare as the segments are usually short even after simplification), we break the segment into multiple segments to suit the entry intervals.
For example, if a segment starts at 8:29:50 and ends at 8:30:03, we break the segment at 8:30:00 into two segments that suit the time intervals [8:27:00, 8:30:00) (i.e., $N_5$) and [8:30:00, 8:33:00) of the leaf entries.

\begin{figure}[htp]
  \centering
  \includegraphics[width=0.47\textwidth]{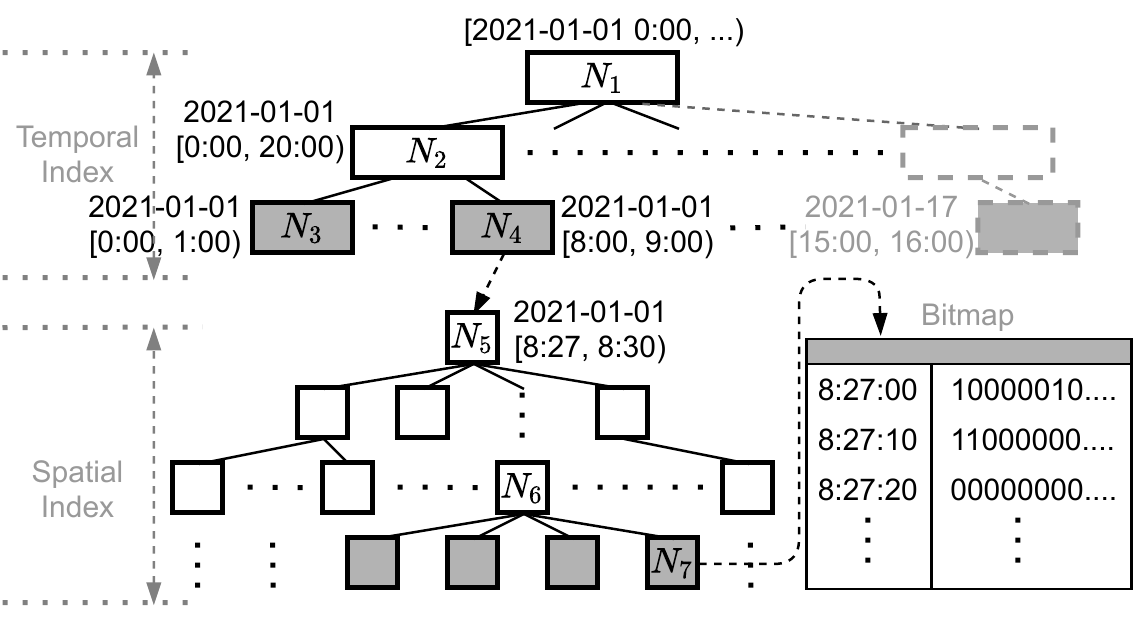}
  \caption{STS-Index server-side structure}\label{fig:index_structure}
\end{figure}

We create a \emph{quasi-quadtree} as the spatial index to obtain nodes with non-overlapping \emph{minimum bounding rectangles}~(MBR). We use endpoints of the segments to build this tree. First, we insert the endpoints into the root. Once the root node capacity is reached, we partition the space into four quadrants each forming a child node. Every  endpoint is moved to the child node enclosing it, while the root node now stores pointers to the child nodes. The process is done recursively until every node is within its capacity. In a leaf node, we further store the segments overlapped by the node MBR (a segment may be in multiple nodes), together with the IDs of the corresponding trajectories and clients. 

When the data distribution is skewed, we may have many segments in the same leaf node. We add bitmaps to help query such segments.
The bitmaps correspond to disjoint intervals that together cover the time interval of the segments in the node.
The number of bitmaps can be empirically  determined based on the sampling rate. 
Each bit value denotes whether a segment overlaps with a time interval. 
In Fig.~\ref{fig:index_structure}, 
we build a bitmap for $N_7$ with 10-second intervals, where  both the first and the seventh segments in $N_7$ overlap with the interval [8:27:00, 8:27:10).


\textbf{Update handling.} 
When there are new obfuscated trajectories, their segments (and the endpoints of the segments) are added to the index by a top-down traversal to find the nodes to be inserted into (following the query procedure in the next section). 
Trajectory deletion can be also done by first a tree traversal to locate the trajectory segments to be deleted. Then, the segments are removed, together with any empty nodes.  

\section{STS-Join Query Processing} \label{sec:traj_join}
We illustrate how to join a query trajectory set $\mathcal{D}_q$ with a known trajectory set $\mathcal{D}_p$ indexed in  
our STS-Index. 
Set $\mathcal{D}_q$ is not obfuscated, e.g., the trajectories of confirmed COVID-19 cases are reported to the authority for contact tracing. 
Firstly, we present our STS-Join query algorithm and optimize it with a backtracking technique.
Then, we illustrate how to prune dissimilar pairs on the server by bounding the actual trajectory similarity based on obfuscated trajectories. 
Finally, we analyze the algorithm cost.  

\subsection{The STS-Join Query Algorithm} \label{sec:traj_join_algo}
A straightforward algorithm is to query every segment from $\mathcal{D}_q$ independently over STS-Index, identify the data segments satisfying the query, and send the query trajectory to the corresponding clients for verification against the original trajectories. 

We observe that, segments of a trajectory are connected end to end. Points of adjacent segments are likely to be in the leaf nodes that are in short-hop distance from each other in the index. Besides, our quasi-quadtrees divide the space without overlaps. 
By leveraging these features, we propose an efficient backtracking-based join algorithm that starts from the inner nodes instead of the root for querying each segment of a trajectory.

\textbf{MBR expansion for accurate query processing.} To ensure no false dismissals, we expand the MBRs of query segments to cover the simplified-and-obfuscated trajectories, as well as the query distance threshold $\delta_d$. Overall, we expand the MBR of a query segment by $\delta_d+\theta_{sp}+\theta_{ob}$ on each side.
This ensures no false negatives, because by definition, the distance between a point on an original trajectory segment and the  corresponding point on its simplified-and-obfuscated version cannot be greater than $\delta_d+\theta_{sp}+\theta_{ob}$ in any dimension.
We use $\lceil{mbr}\rceil_{s_{i}}$ to denote the expanded MBR of $s_i$. 
In Fig.~\ref{fig:pivot}, $\lceil{mbr}\rceil_{s^{\hat{\nu}}_{1}}$ is the expanded MBR of the segment  $s^{\hat{\nu}}_{1}$.

\begin{figure}[htp]
  \centering
  \includegraphics[width=0.3\textwidth]{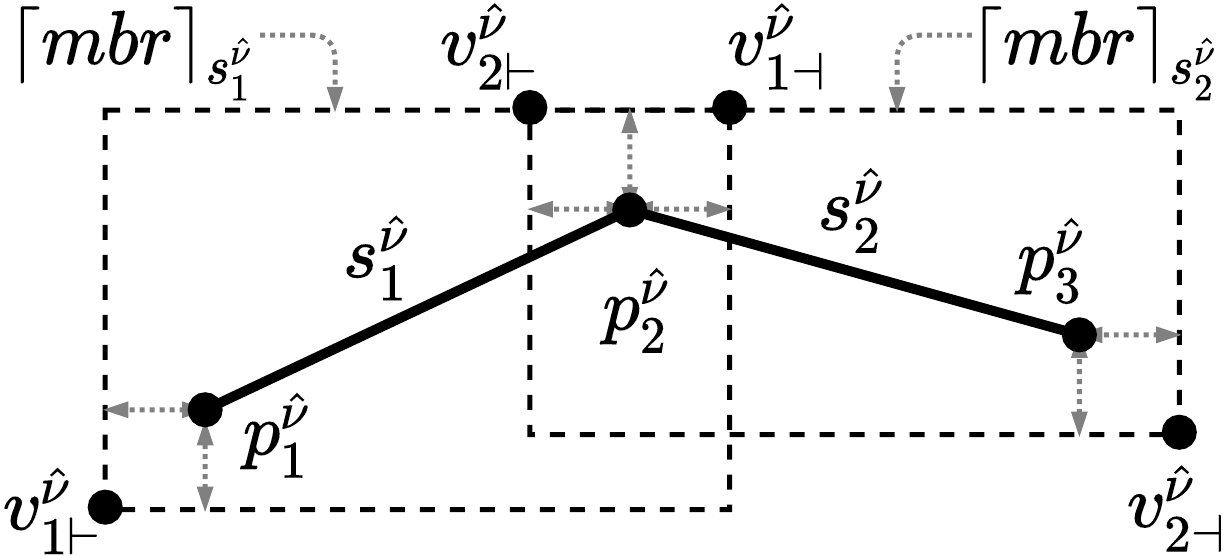}
  \caption{Example of expanded MBRs}\label{fig:pivot}
\end{figure}

\textbf{Pivots.} We use \emph{pivots} to help locate the adjacent segments in our algorithm. Pivots are vertices of the expanded MBR of a query segment that are close to the endpoints of the segment. Using a pivot instead of an endpoint helps avoid false negative query results, since the query MBR is expanded. 
In Fig.~\ref{fig:pivot}, $v^{\hat{\nu}}_{1\vdash}$ and $v^{\hat{\nu}}_{1\dashv}$ are the pivots of query segment $s^{\hat{\nu}}_{1}$.

\begin{algorithm}
\caption{STS-Join (server side)} \label{algo:backtrace_join}
\KwIn{$\mathcal{D}_q$: query trajectory set; $\delta_d$: query distance threshold} 
\KwOut{$\mathcal{P}$: the set of STS-Joined trajectory pairs}
\BlankLine
\For {$\mathcal{T}_\nu$ in $\mathcal{D}_q$} {
    $\mathcal{T}_{\hat{\nu}} \leftarrow$ split $\mathcal{T}_\nu$ to suit node intervals of the temporal index\;
    \For {$s^{\hat{\nu}}_i$ in $\mathcal{T}_{\hat{\nu}}$} {
        \If {$i$ \emph{is 1} \textbf{or} $s^{\hat{\nu}}_{i}$ \emph{and} $s^{\hat{\nu}}_{i-1}$ \emph{are in different quasi-quadtrees}} {
            $pivot \leftarrow v^{\hat{\nu}}_{i\vdash}$\;
            $N \leftarrow$ the root of the quasi-quadtree whose time interval overlaps with that of $s^{\hat{\nu}}_i$\; 
            $N \leftarrow FindNode(N, pivot)$\;
        }
        $pivot \leftarrow v^{\hat{\nu}}_{{i+1}\vdash}$\; 
        $N_p \leftarrow Backtrack(N, v^{\hat{\nu}}_{i\dashv})$\;
        $Q.enqueue(N_p)$\;
        \While {$Q \neq \emptyset$} {
            $N_p \leftarrow Q.dequeue()$\;
            \For {each $entry$ in $N_p$} {
                 \If {$overlap(entry.mbr, \ceil{mbr}_{s^{\hat{\nu}}_i})$} {
                    \If {$N_p$ \emph{is not a leaf node}} {
                        $Q.enqueue(entry.child)$\;
                    }
                    \Else {
                        add $\langle s^{\widetilde{\mu}}, s^{\hat{\nu}}_i \rangle$ into $\mathcal{S}$, where $s^\mu$ is the segment indexed at $entry$\;
                    }
                }
            }
            \If {$N_p$ \emph{is a leaf node} \textbf{and} \emph{its} $mbr$ \emph{covers} $pivot$ } {
                $N \leftarrow N_p$\; 
            }
        }
    }
}
    Update every pair $\langle s^{\widetilde{\mu}}, s^{\hat{\nu}}_i \rangle \in \mathcal{S}$ to its corresponding non-time-interval-split segment pair $\langle s^{\widetilde{\mu}}, s^\nu \rangle$\;
    Group the pairs in $\mathcal{S}$ by the client ID of $s^{\widetilde{\mu}}$, and send corresponding pairs to the clients for further verification\;
    $\mathcal{P} \leftarrow $ the set of trajectory pairs that are returned from clients\;
    \Return $\mathcal{P}$\;
\end{algorithm}


\textbf{On server side.} As summarized in Algorithm \ref{algo:backtrace_join},  our server-side algorithm iteratively searches for similar segments for the query trajectories (lines 1 to 20). 
For every query trajectory $\mathcal{T}_\nu$, we split its segments according to the time intervals of the root nodes of the quasi-quadtrees (line 2), like we did in index construction. 
Then, we search for data segments similar to every segment in the split query trajectory  $\mathcal{T}_{\hat{\nu}}$ (lines 3 to 20).
We first update $pivot$ and its corresponding leaf node $N$. This is only needed for the first segment in $\mathcal{T}_{\hat{\nu}}$ or when we move on to a  segment in a new time interval (lines 4 to 7), which rarely happens.
Function $FindNode(N, pivot)$ locates node $N$ whose MBR covers the pivot in the quasi-quadtree by a point query (line 7). 
Then, we can leverage node $N$ to backtrack with 
function $Backtrack(N, p)$ that starts from $N$ and recursively traces back to the ancestor node whose MBR covers $p$. For each query segment, we stop the backtracking at the ancestor node $N_p$ that covers the upper pivot $v^{\hat{\nu}}_{i\dashv}$ of the expanded MBR of the current query segment (lines 8 and 9). Then, we search for similar segment pairs from $N_p$ for the current query segment (lines 10 to~20). 
For pruning, only tree nodes whose MBRs overlap with the expanded MBR $\ceil{mbr}_{s^{\hat{\nu}}_i}$ of the query segment are visited, which are stored in a queue $Q$ to support the traversal (lines 14 to 16). When the traversal reaches a simplified-and-obfuscated segment $s^{\widetilde{\mu}}$, we add $\langle s^{\widetilde{\mu}}, s^{\hat{\nu}}_i \rangle$ to the result set $\mathcal{S}$ (line 18).
Meanwhile, we verify each leaf node for whether it contains the lower pivot of the next query segment which will be used at the stage of backtracking in the next segment query (line 19).
After all query trajectories have been processed, we update the query segments in $\mathcal{S}$ to their corresponding non-time-interval-split segments (line 21).
Then, we group the pairs in $\mathcal{S}$ by the client that generated $s^{\widetilde{\mu}}$, and send the pairs to the clients based on the client IDs of the segments (line 22). 
 STS-Join verifies the trajectory similarity on the clients, since the server only stores  simplified-and-obfuscated trajectories.
After each client has computed the trajectory similarity, it returns the result set to the server. 

\textbf{On client side.}
On each client, the trajectories corresponding to the segments  $s^{\widetilde{\mu}}$ received from the server are retrieved (by ID lookups using the local trajectory IDs of the segments). Then, we compute the CDD of the segment pairs from the server and add up the CDDS for every trajectory pair formed by the segment pairs. The pairs satisfying the time threshold $\delta_t$ are returned as the query result to the server. This  guarantees no false positives. 

\textbf{Discussion.} 
Backtracking is not limited to quasi-quadtrees. It is applicable to all space-partitioning indices in which the process can stop at a common parent node. A query starting from such parent nodes will not have false negatives, since there are no overlaps among MBRs on the same level in the tree. That means one location can be covered only by one MBR at each level. 
In addition, backtracking can be applied in index construction, insertion, and deletion, because segments are inserted or deleted sequentially, and we can leverage the common parent node approach to reduce the node accesses when operating on the next segment.

\vspace{-2mm}
\subsection{CDDS-Based Pruning}\label{subsec:sim_bound}
Algorithm~\ref{algo:backtrace_join} sends all segment pairs that may satisfy the query distance threshold $\delta_d$ to the clients for further verification and CDDS computation. In this subsection, we compute an upper bound of the actual CDDS between a query trajectory $\mathcal{T}_\nu$ and a known trajectory $\mathcal{T}_\mu$ using $\mathcal{T}_\nu$ and the simplified-and-obfuscated  segments of $\mathcal{T}_\mu$ stored on the server. We only send the segment pairs to the corresponding client when the upper bound exceeds $\delta_t$, thus saving both communication costs between the server and the clients and computation costs on the clients. This essentially adds a   subprocedure to prune the segment pairs before Line 22 of Algorithm~\ref{algo:backtrace_join} using an upper bound of $cdds(\mathcal{T}_\mu, \mathcal{T}_\nu)$. For simplicity, we only describe the CDDS-based pruning procedure below but do not repeat the full pseudocode of the STS-Join algorithm powered by it. 

\textbf{CDDS-based pruning procedure.} 
We group the segment pairs that satisfy the query distance threshold (i.e., the segment pairs in set $\mathcal{S}$ at  Line 21 of Algorithm~\ref{algo:backtrace_join}) by their client IDs, local trajectory IDs, and query trajectory IDs. The segment pairs that share the same client ID, local trajectory ID, and query trajectory ID all come from the same pair of simplified-and-obfuscated known and query trajectories $(\mathcal{T}_{\widetilde{\mu}}, \mathcal{T}_\nu)$ for CDDS upper bound computation. Let the set of such segment pairs be $\mathcal{S}_{\widetilde{\mu}, \nu}$ and the original known trajectory corresponding to $\mathcal{T}_{\widetilde{\mu}}$ be $\mathcal{T}_\mu$, respectively. Then, we compute an upper bound of the 
close-distance duration (i.e., CDD) for every pair of segments in $\mathcal{S}_{\widetilde{\mu}, \nu}$. Summing up these upper bounds for all segment pairs in $\mathcal{S}_{\widetilde{\mu}, \nu}$ yields our upper bound of $cdds(\mathcal{T}_\mu, \mathcal{T}_\nu)$, since these pairs are the only ones in $\mathcal{T}_{\widetilde{\mu}} \times \mathcal{T}_\nu$ (and hence $\mathcal{T}_{\mu} \times \mathcal{T}_\nu$)  that satisfy the query distance threshold by definition. The set $\mathcal{S}_{\widetilde{\mu}, \nu}$ is pruned from being sent to a client if the CDDS upper bound is less than $\delta_t$. Next, we detail our CDD upper bound computation. 

\textbf{CDD upper bound.} Recall that the CDD between a known segment $s^\mu_i$ of $\mathcal{T}_{\mu}$ and a query segment $s^\nu_j$ of $\mathcal{T}_\nu$, $cdd(s^\mu_i, s^\mu_j)$, is the time duration when  their spatial distance is within $\delta_d$. We further define the CDD between a simplified-and-obfuscated segment $s^{\widetilde{\mu}}_i$ and a query segment $s^\nu_j$ as 
the time duration where their spatial distance is within $\delta_d + \theta_{sp} + \sqrt{2}\theta_{ob}$, denoted as  $\overline{cdd}(s^{\widetilde{\mu}}_i, s^\nu_j)$. 
Then, for all known segments $s^\mu_i, s^\mu_{i+1}, \ldots, s^\mu_{i+\Delta}$  corresponding to $s^{\widetilde{\mu}}_i$, $\sum_{k=0}^
\Delta cdd(s^\mu_{i+k}, s^\mu_j) \le \overline{cdd}(s^{\widetilde{\mu}}_i, s^\nu_j)$. This is because, given a point on a known segment, its corresponding point on the simplified-and-obfuscated segment is within a distance of $\theta_{sp} + \sqrt{2}\theta_{ob}$ by definition. Thus, if the distance between (points on) $s^\mu_{i+k}$ and $s^\nu_j$ is within $\delta_d$, the distance between (points on) $s^{\widetilde{\mu}}_i$ and $s^\nu_j$ must be within $\delta_d + \theta_{sp} + \sqrt{2}\theta_{ob}$. 
Therefore, we use $\overline{cdd}(s^{\widetilde{\mu}}_i, s^\nu_j)$ as our CDD upper bound of the original known trajectories. 

We formulate the CDD upper bound and show its correctness with the following lemma. 

\vspace{-3mm}
\begin{figure}[t]
  \centering
  \includegraphics[width=0.34\textwidth]{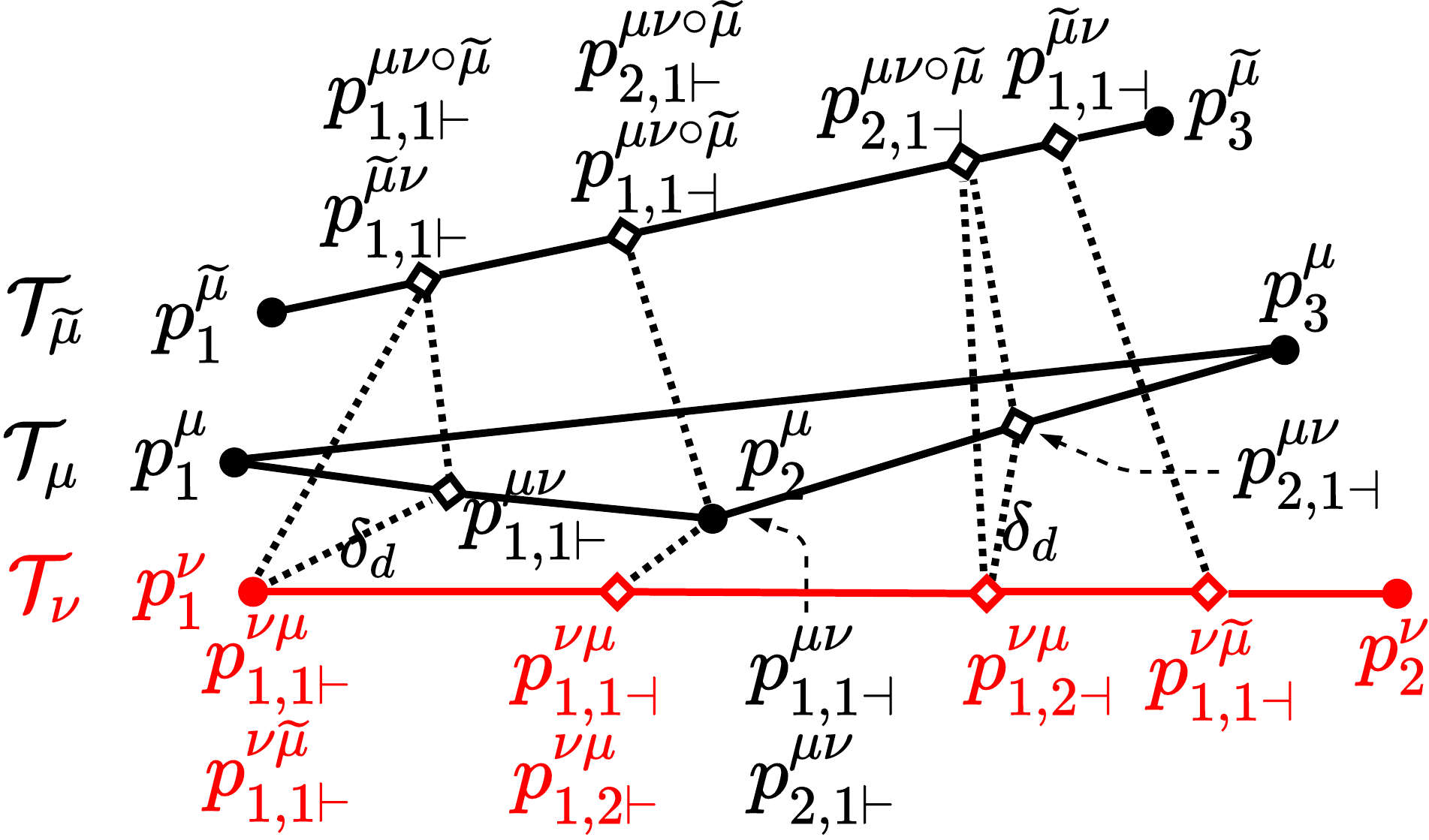}
  \caption{Example of trajectory similarity bounding}\label{fig:bound}
  \vspace{-1mm}
\end{figure}

\begin{lemma} \label{lemma:bound_seg}
Given a query segment $s^\nu_j \in \mathcal{T}_\nu$, a sequence of known segments $s^\mu_i, s^\mu_{i+1}, \ldots,  s^\mu_{i+\Delta} \in \mathcal{T_\mu}$, and their corresponding simplified-and-obfuscated segment $s^{\widetilde{\mu}}_i \in \mathcal{T}_{\widetilde{\mu}}$, we have:
\begin{equation}
\sum\nolimits_{k=0}^{\Delta} cdd(s^\mu_{i+k}, s^\nu_j) \leqslant \overline{cdd}(s^{\widetilde{\mu}}_i, s^\nu_j) 
\end{equation}
where $\overline{cdd}$ is the CDD with distance threshold $\delta_d + \theta_{sp} + \sqrt{2}\theta_{ob}$.
\end{lemma}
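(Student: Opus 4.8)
The plan is to reinterpret both sides of the inequality as the total time length (Lebesgue measure) of a set of time instants, and then to reduce the claim to a single set inclusion. First I would observe that the consecutive known segments $s^\mu_i, s^\mu_{i+1}, \ldots, s^\mu_{i+\Delta}$ occupy disjoint, contiguous time sub-intervals whose union is exactly the time span of the single simplified-and-obfuscated segment $s^{\widetilde{\mu}}_i$, since simplification merges segments without altering their combined time range and obfuscation shifts only coordinates, not timestamps. Consequently, writing $I = ot(s^{\widetilde{\mu}}_i, s^\nu_j)$ for the overlapping time span of $s^{\widetilde{\mu}}_i$ and $s^\nu_j$, the left-hand side equals the measure of $A = \{ t \in I : dist(\mathcal{T}_\mu, \mathcal{T}_\nu) \leqslant \delta_d \text{ at time } t \}$, because the per-segment CDDs add up over these disjoint intervals. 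Likewise, the right-hand side is the measure of $B = \{ t \in I : dist(\mathcal{T}_{\widetilde{\mu}}, \mathcal{T}_\nu) \leqslant \delta_d + \theta_{sp} + \sqrt{2}\theta_{ob} \text{ at time } t \}$.

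The core of the argument is the pointwise bound that, at every time $t$, the location on the original trajectory is within $\theta_{sp} + \sqrt{2}\theta_{ob}$ of the location on its simplified-and-obfuscated counterpart. I would establish this in two steps via the triangle inequality. By the simplification guarantee of Section~\ref{sec:index}, the interpolated location at time $t$ on the non-obfuscated simplified segment differs from that on the original segments by at most $\theta_{sp}$. For the obfuscation step, the two endpoints of the simplified segment are each shifted by a noise vector whose two coordinates lie in $[-\theta_{ob}, \theta_{ob}]$ (per Equation~\ref{eq:bounded_lap_pdf}); the displacement of an interpolated point at time $t$ is then a convex combination of the two endpoint noise vectors, so each of its coordinates also lies in $[-\theta_{ob}, \theta_{ob}]$ and its Euclidean norm is at most $\sqrt{2}\theta_{ob}$. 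Combining the two displacements yields the claimed $\theta_{sp} + \sqrt{2}\theta_{ob}$ bound.

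Given this pointwise bound, the inclusion $A \subseteq B$ follows from one further application of the triangle inequality: for any $t \in A$ we have, at time $t$, $dist(\mathcal{T}_{\widetilde{\mu}}, \mathcal{T}_\nu) \leqslant dist(\mathcal{T}_{\widetilde{\mu}}, \mathcal{T}_\mu) + dist(\mathcal{T}_\mu, \mathcal{T}_\nu) \leqslant (\theta_{sp} + \sqrt{2}\theta_{ob}) + \delta_d$, so $t \in B$. Monotonicity of measure then gives $|A| \leqslant |B|$, which is precisely the stated inequality. I expect the main obstacle to be the obfuscation bound: one must verify that the $\sqrt{2}\theta_{ob}$ guarantee holds not merely at the segment endpoints, where the noise is applied, but at every interpolated time instant, which is exactly what the convex-combination observation secures. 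The bookkeeping that the per-segment durations tile $I$ without gaps or overlaps (so that summing the $cdd$ values reproduces $|A|$ faithfully) also requires a little care.
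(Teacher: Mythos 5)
Your proof is correct, and it reaches the paper's conclusion by a cleaner mechanism than the paper itself uses. The shared core is identical: both arguments rest on the pointwise bound that, at every instant $t$, the interpolated location on $\mathcal{T}_\mu$ lies within $\theta_{sp} + \sqrt{2}\theta_{ob}$ of the corresponding location on $\mathcal{T}_{\widetilde{\mu}}$, followed by one triangle inequality to inflate $\delta_d$. Where you diverge is in the aggregation step. The paper works segment by segment: it notes that $dist^2(s^\mu_{i+k}, s^\nu_j) = \delta_d^2$ and $dist^2(s^{\widetilde{\mu}}_{i+k}, s^\nu_j) = (\delta_d + \theta_{sp} + \sqrt{2}\theta_{ob})^2$ are quadratics in $t$ with non-negative leading coefficient, uses that convexity to show the root interval $[t^{\nu\mu}_{j,i+k\vdash}, t^{\nu\mu}_{j,i+k\dashv}]$ is contained in $[t^{\nu\widetilde{\mu}}_{j,i+k\vdash}, t^{\nu\widetilde{\mu}}_{j,i+k\dashv}]$, and then sums these per-segment containments over the disjoint overlap spans $ot(s^\mu_{i+k}, s^\nu_j)$. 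You instead recast both sides as Lebesgue measures of the sets $A$ and $B$ over $I = ot(s^{\widetilde{\mu}}_i, s^\nu_j)$ and conclude by the inclusion $A \subseteq B$ and monotonicity of measure, which dispenses with the quadratic-root and convexity machinery entirely; your argument would survive even if the close-distance set were not a single interval, and it makes explicit the tiling of $I$ by the segments' time spans, which the paper handles only implicitly in its final summation sentence. You also fill in a detail the paper merely asserts: that the obfuscation displacement of an \emph{interpolated} point (not just an endpoint) is a convex combination of the two endpoint noise vectors, hence still bounded coordinate-wise by $\theta_{ob}$ and in Euclidean norm by $\sqrt{2}\theta_{ob}$; this is exactly the right justification and is needed for the pointwise bound to hold at all times. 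The only thing the paper's route buys in exchange is the explicit root-ordering relation $t^{\nu\widetilde{\mu}}_{j,i+k\vdash} \leqslant t^{\nu\mu}_{j,i+k\vdash} < t^{\nu\mu}_{j,i+k\dashv} \leqslant t^{\nu\widetilde{\mu}}_{j,i+k\dashv}$, which ties directly to how the CDD boundaries are actually computed in the query algorithm (via the quadratic formula of Equation~\ref{eq:distance_to_t}); your measure-theoretic statement is agnostic to that implementation detail.
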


\begin{proof}
We use Fig.~\ref{fig:bound} to help illustrate the proof, where known segments $s^\mu_{i}$ can be  $\overline{p^\mu_1,p^\mu_2}$ and $\overline{p^\mu_2,p^\mu_3}$ correspond to a simplified-and-obfuscated segment $s^{\widetilde{\mu}}_{i} = \overline{p^{\widetilde{\mu}}_1,p^{\widetilde{\mu}}_3}$, and the query segment is shown as $s^\nu_{j} = \overline{p^\nu_1,p^\nu_2}$. 
Besides, any two points connected by a dash line have the same timestamp. 

Given a known segment $s^\mu_{i+k}$  and a query segment $s^\nu_j$, their CDD can be non-zero only if they overlap in their time span, i.e.,  $ot(s^\mu_{i+k}, s^\nu_j) \neq \emptyset$. Further, if the CDD is non-zero, it is defined by the two roots (i.e., two time points) of the quadratic equation $dist^2(s^\mu_{i+k}, s^\nu_j) = {\delta^2_d}$ (cf. Equation~\ref{eq:distance_to_t}).  
Let the two roots be $t^{\nu\mu}_{j,i+k\vdash}$ and $t^{\nu\mu}_{j,i+k\dashv}$. 
Then, $cdd(s^\mu_{i+k}, s^\nu_j) = t^{\nu\mu}_{j,i+k\dashv} - t^{\nu\mu}_{j,i+k\vdash}$. 
Note that, if either $t^{\nu\mu}_{j,i+k\vdash}$ or $t^{\nu\mu}_{j,i+k\dashv}$ is outside the range of  $ot(s^\mu_{i+k}, s^\nu_j)$, we replace it with the corresponding boundary value of $ot(s^\mu_{i+k}, s^\nu_j)$ to meet the overlapping time span requirement of the segments. Let the points corresponding to  $t^{\nu\mu}_{j,i+k\vdash}$ on $s^\mu_{i+k}$ and  $s^\nu_{j}$ be $p^{\mu\nu}_{i+k,j\vdash}$ and $p^{\nu\mu}_{j,i+k\vdash}$, respectively. 
Similarly, let the points corresponding to 
$t^{\nu\mu}_{j,i+k\dashv}$ on the two segments be $p^{\nu\mu}_{j,i+k\dashv}$ and $p^{\mu\nu}_{i+k,j\dashv}$, respectively. In Fig.~\ref{fig:bound}, these four points on $s^\mu_{i+k}$ and $s^\nu_j$ are $p^{\nu\mu}_{1,1\vdash}$, $p^{\mu\nu}_{1,1\vdash}$, $p^{\nu\mu}_{1,1\dashv}$, and $p^{\mu\nu}_{1,1\dashv}$ ($i = 1$, $j = 1$, and $k = 0$).

Then, we analyze the distance between a known segment $s^\mu_{i+k}$ and its corresponding simplified-and-obfuscated segment $s^{\widetilde{\mu}}_{i}$. 
Note that a sequence of known segments can be simplified into a single segment with a simplification threshold $\theta_{sp}$, while the simplified segment is further  obfuscated by shifting the endpoints with a maximum shifting distance $\theta_{ob}$ along each dimension.
Thus, the distance between any point $p$ on $s^\mu_{i+k}$ and its corresponding point (i.e., the point at the same time $t$ as that of $p$) on $s^{\widetilde{\mu}}_{i}$, is bounded within $\theta_{sp} + \sqrt{2}\theta_{ob}$.
This applies to the distance between points (at any given time $t$) on $\overline{p^\mu_1,p^\mu_2}$ ($\overline{p^\mu_2,p^\mu_3}$) and $\overline{p^{\widetilde{\mu}}_1,p^{\widetilde{\mu}}_3}$ in Fig.~\ref{fig:bound}.

Next, we derive the distance between the query segment and the simplified-and-obfuscated segment by leveraging the distance relationship above. 
By definition, the distance between the known segment  $s^\mu_{i+k}$ and the query segment $s^\nu_j$ does not exceed $\delta_d$ when $t \in [t^{\nu\mu}_{j,i+k\vdash}, t^{\nu\mu}_{j,i+k\dashv}]$, while 
the distance between $s^\mu_{i+k}$ and its corresponding simplified-and-obfuscated segment $s^{\widetilde{\mu}}_{i+k}$ does not exceed $\theta_{sp} + \sqrt{2}\theta_{ob}$ 
Therefore, the distance $dist(s^{\widetilde{\mu}}_{i+k}, s^\nu_j)$ between the query segment and the simplified-and-obfuscated known segment does not exceed  $\delta_d + \theta_{sp} + \sqrt{2}\theta_{ob}$, when $t \in [t^{\nu\mu}_{j,i+k\vdash}, t^{\nu\mu}_{j,i+k\dashv}]$. 
In Fig.~\ref{fig:bound}, we locate a point $p^{\mu\nu
\circ\widetilde{\mu}}_{2,1\dashv}$ on $\overline{p^{\widetilde{\mu}}_1,p^{\widetilde{\mu}}_3}$ whose timestamp is $t^{\nu\mu}_{1,2\dashv}$ ($i = 1$, $j = 1$, and $k = 1$).
Then, the distance between $p^{\mu\nu\circ\widetilde{\mu}}_{2,1\dashv}$ and $p^{\mu\nu}_{2,1\dashv}$ does not exceed $\theta_{sp} + \sqrt{2}\theta_{ob}$, 
while the distance between $p^{\mu\nu}_{2,1\dashv}$ and $p^{\nu\mu}_{1,2\dashv}$ is $\delta_d$ as shown above. Thus, the distance between  $p^{\mu\nu\circ\widetilde{\mu}}_{2,1\dashv}$ and $p^{\nu\mu}_{1,2\dashv}$ is within $\delta_d + \theta_{sp} + \sqrt{2}\theta_{ob}$.

Since there exists a time range $[t^{\nu\mu}_{j,i+k\vdash}, t^{\nu\mu}_{j,i+k\dashv}]$ where the distance between $s^{\widetilde{\mu}}_{i+k}$ and $s^\nu_j$ does not exceed $\delta_d + \theta_{sp} + \sqrt{2}\theta_{ob}$, 
the quadratic distance equation  
$dist^2(s^{\widetilde{\mu}}_{i+k}, s^\nu_j) = {(\delta_d + \theta_{sp} + \sqrt{2}\theta_{ob})}^2$ must have two roots, which are denoted as $t^{\nu\widetilde{\mu}}_{j,i+k\vdash}$ and $t^{\nu\widetilde{\mu}}_{j,i+k\dashv}$, respectively.
Also, since this quadratic function
has a non-negative quadratic coefficient (cf. Equation~\ref{eq:distance_to_t}), and the distance does not exceed $\delta_d + \theta_{sp} + \sqrt{2}\theta_{ob}$ when $t \in [t^{\nu\mu}_{j,i+k\vdash}, t^{\nu\mu}_{j,i+k\dashv}]$, we can derive that $t^{\nu\widetilde{\mu}}_{j,i+k\vdash} \leqslant t^{\nu\mu}_{j,i+k\vdash} < t^{\nu\mu}_{j,i+k\dashv} \leqslant t^{\nu\widetilde{\mu}}_{j,i+k\dashv}$.
Since $cdd(s^\mu_{i+k}, s^\nu_j) = t^{\nu\mu}_{j,i+k\dashv} - t^{\nu\mu}_{j,i+k\vdash}$ and $\overline{cdd}(s^{\widetilde{\mu}}_{i+k}, s^\nu_j) = t^{\nu\widetilde{\mu}}_{j,i+k\dashv} - t^{\nu\widetilde{\mu}}_{j,i+k\vdash}$, we have $cdd(s^\mu_{i+k}, s^\nu_j) \leqslant \overline{cdd}(s^{\widetilde{\mu}}_{i+k}, s^\nu_j)$ when $t \in ot(s^\mu_{i+k}, s^\nu_j)$. Recall that  $\overline{cdd}$ is CDD computed with distance threshold $\delta_d + \theta_{sp} + \sqrt{2}\theta_{ob}$.
In Fig.~\ref{fig:bound}, $t^{\nu{\widetilde{\mu}}}_{1,1\dashv}$ is the upper boundary of $\overline{cdd}(\overline{p^{\widetilde{\mu}}_1,p^{\widetilde{\mu}}_3}, \overline{p^{\nu}_1,p^{\nu}_2})$, and the corresponding points on the query segment and the simplified-and-obfuscated  segment are $p^{\nu{\widetilde{\mu}}}_{1,1\dashv}$ and $p^{{\widetilde{\mu}}\nu}_{1,1\dashv}$ ($i = 1$, $j = 1$, and $k = 1$) respectively.
Meanwhile, when $t = t^{\nu\mu}_{1,2\dashv}$, the distance between $\overline{p^{\widetilde{\mu}}_1, p^{\widetilde{\mu}}_3}$ and $\overline{p^{\nu}_1,p^{\nu}_2}$ does not exceed  $\delta_d + \theta_{sp} + \sqrt{2}\theta_{ob}$. Thus, we  have $t^{\nu\mu}_{1,2\dashv} \leqslant t^{\nu{\widetilde{\mu}}}_{1,1\dashv}$.
Such an inequality is also applicable to other CDD boundaries. 


Finally, we sum up the CDD of each segment in $s^\mu_i, s^\mu_{i+1}, \ldots, s^\mu_{i+\Delta}$ with $s^\nu_j$, and we sum up $\overline{cdd}(s^{\widetilde{\mu}}_i, s^\nu_j)$ that corresponds to different time ranges in $ot(s^\mu_{i+k}, s^\nu_j)$ where $0 \leqslant k \leqslant \Delta$. Since the inequality is satisfied on each separate time range, such inequality is also satisfied for the sums, i.e., $cdd(\overline{p^\mu_1,p^\mu_2}, \overline{p^\nu_1,p^\nu_2})$ + $cdd(\overline{p^\mu_2,p^\mu_3}, \overline{p^\nu_1,p^\nu_2}) \leqslant \overline{cdd}(\overline{p^{\widetilde{\mu}}_1,p^{\widetilde{\mu}}_3}, \overline{p^\nu_1,p^\nu_2})$ in Fig.~\ref{fig:bound}. This completes the proof.
\end{proof}

Given Lemma~\ref{lemma:bound_seg}, we have the following lemma to bound the CDDS for the original known trajectories. 
\begin{lemma} \label{lemma:bound_traj}
Given a query trajectory $\mathcal{T}_\nu$, a known  trajectory $\mathcal{T}_\mu$, and its corresponding simplified-and-obfuscated version  $\mathcal{T}_{\widetilde{\mu}}$, we have:
\begin{equation}
cdds(\mathcal{T}_\mu, \mathcal{T}_\nu) \leqslant \overline{cdds}(\mathcal{T}_{\widetilde{\mu}}, \mathcal{T}_\nu) 
\end{equation}
where $\overline{cdds}$ is CDDS with distance threshold $\delta_d + \theta_{sp} + \sqrt{2}\theta_{ob}$.
\end{lemma}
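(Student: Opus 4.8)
The plan is to reduce Lemma~\ref{lemma:bound_traj} to Lemma~\ref{lemma:bound_seg} by a summation argument, since by Definition~\ref{def:cd} the quantity CDDS is nothing more than a sum of per-segment-pair CDD values. First I would expand the left-hand side using the definition, writing $cdds(\mathcal{T}_\mu, \mathcal{T}_\nu)$ as the sum of $cdd(s^\mu_a, s^\nu_j)$ over all time-overlapping segment pairs $(s^\mu_a, s^\nu_j) \in \mathcal{T}_\mu \times \mathcal{T}_\nu$. Then I would observe that the simplification procedure induces a partition of the known segments $s^\mu_1, \ldots, s^\mu_{|\mathcal{T}_\mu|}$ into consecutive runs, where each run $s^\mu_i, \ldots, s^\mu_{i+\Delta}$ collapses into exactly one simplified-and-obfuscated segment $s^{\widetilde{\mu}}_i \in \mathcal{T}_{\widetilde{\mu}}$. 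This partition is the key bookkeeping device: every original known segment belongs to exactly one simplified segment, so regrouping the double sum by $(s^{\widetilde{\mu}}_i, s^\nu_j)$ rather than by $(s^\mu_a, s^\nu_j)$ introduces no double counting and omits no term.

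Next I would fix one query segment $s^\nu_j$ and one simplified segment $s^{\widetilde{\mu}}_i$ covering the run $s^\mu_i, \ldots, s^\mu_{i+\Delta}$, and apply Lemma~\ref{lemma:bound_seg} directly to obtain $\sum_{k=0}^{\Delta} cdd(s^\mu_{i+k}, s^\nu_j) \leqslant \overline{cdd}(s^{\widetilde{\mu}}_i, s^\nu_j)$. Summing this inequality over all simplified segments $i$ and all query segments $j$, the left-hand side reassembles into the full double sum, i.e.\ $cdds(\mathcal{T}_\mu, \mathcal{T}_\nu)$, while the right-hand side is exactly $\sum_{i,j} \overline{cdd}(s^{\widetilde{\mu}}_i, s^\nu_j) = \overline{cdds}(\mathcal{T}_{\widetilde{\mu}}, \mathcal{T}_\nu)$ by Definition~\ref{def:cd} applied with the enlarged threshold $\delta_d + \theta_{sp} + \sqrt{2}\theta_{ob}$. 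Because every individual inequality holds and the inequalities are combined with non-negative terms, monotonicity of summation yields the claimed bound.

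The parts that need care are all about the summation indices rather than any new geometry, since the geometric content already resides in Lemma~\ref{lemma:bound_seg}. The main subtlety is the treatment of the time-overlap condition: Definition~\ref{def:cd} restricts the sum to pairs with $ot \neq \emptyset$, so I must verify that dropping this restriction when regrouping is harmless. This holds because $cdd$ of a non-time-overlapping pair is $0$, and---more importantly for the inequality direction---whenever a known segment $s^\mu_{i+k}$ time-overlaps $s^\nu_j$, the simplified segment $s^{\widetilde{\mu}}_i$ covering it spans the same union of time intervals and hence also time-overlaps $s^\nu_j$; thus no left-hand term is ever matched against a vanishing right-hand term. Conversely, if $s^{\widetilde{\mu}}_i$ overlaps $s^\nu_j$ while some constituent $s^\mu_{i+k}$ does not, the extra $\overline{cdd}$ contribution only enlarges the right-hand side, which is consistent with the $\leqslant$ direction. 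Once this indicator bookkeeping is settled, the lemma follows immediately from Lemma~\ref{lemma:bound_seg}.
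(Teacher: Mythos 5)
Your proof is correct and follows exactly the route the paper intends: the paper states Lemma~\ref{lemma:bound_traj} as an immediate consequence of Lemma~\ref{lemma:bound_seg} and omits the proof as ``straightforward,'' and your argument---regrouping the CDDS sum by the runs of original segments that collapse into each simplified-and-obfuscated segment, applying Lemma~\ref{lemma:bound_seg} per run, and checking the time-overlap bookkeeping---is precisely that omitted reduction, with the details filled in.
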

\begin{proof}
The proof is straightforward  and hence is omitted. 
\end{proof}

 \vspace{-2mm}
\subsection{Cost Analysis}
 \vspace{-1mm}
We analyze the cost of Algorithm~\ref{algo:backtrace_join} in terms of the number of node accesses. 
If a query segment spans the whole spatio-temporal space (worst case), all index nodes may be visited, with or without backtracking. However, this rarely happens.  In many cities, points of interest are distributed in a polycentric structure~\cite{polycentric_1, polycentric_2}. Trajectories are likely to gather near local centers, where backtracking helps query the sub-spatial index of the centers. 

The algorithm iterates through the query segments. In each iteration, the costs are spent on finding the common parent node of two adjacent points (and hence adjacent segments) in the query trajectory and on traversing the index to reach leaf nodes.

\begin{figure}[h]
    \centering
    \subfloat[\small{Level 0}\label{fig:backtrace_complexity_level_0}]{
    	\includegraphics[width=0.22\textwidth]{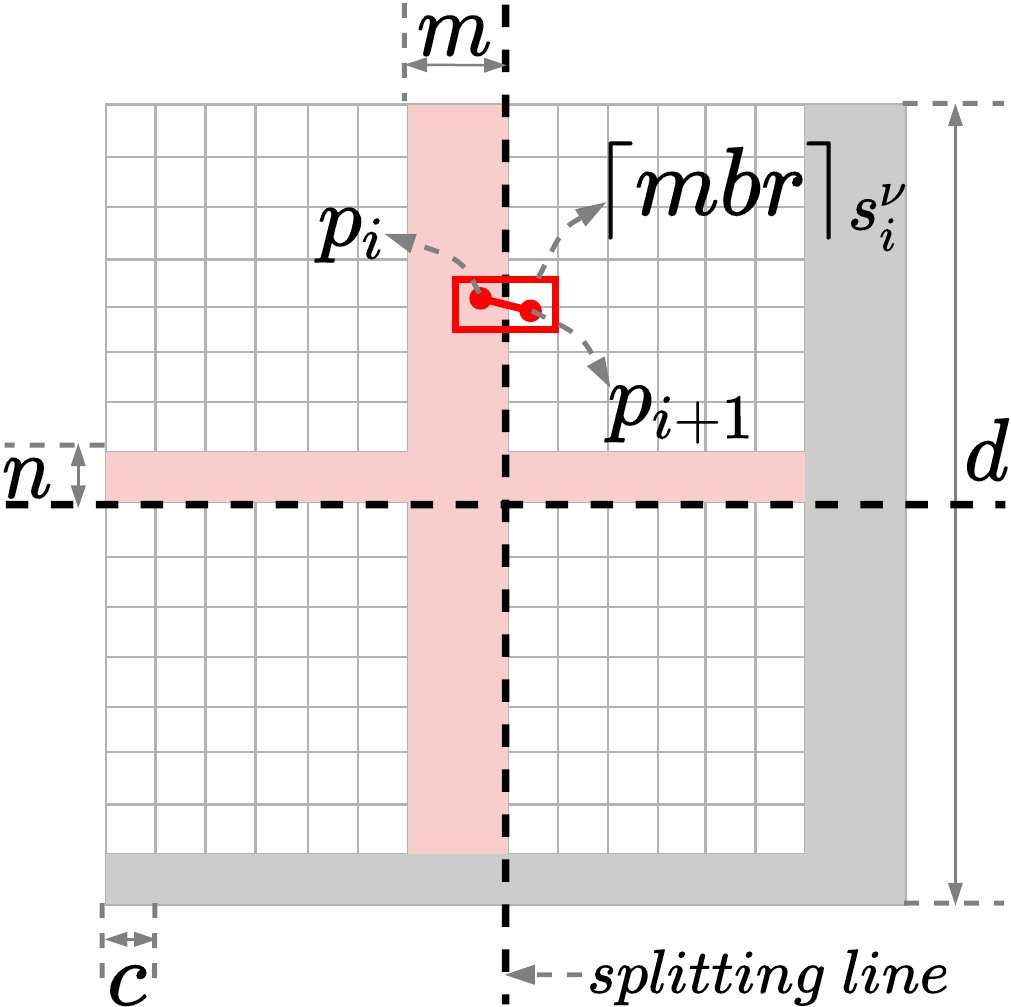}
    }
    \subfloat[\small{Level 1}~\label{fig:backtrace_complexity_level_1}]{
    	\includegraphics[width=0.22\textwidth]{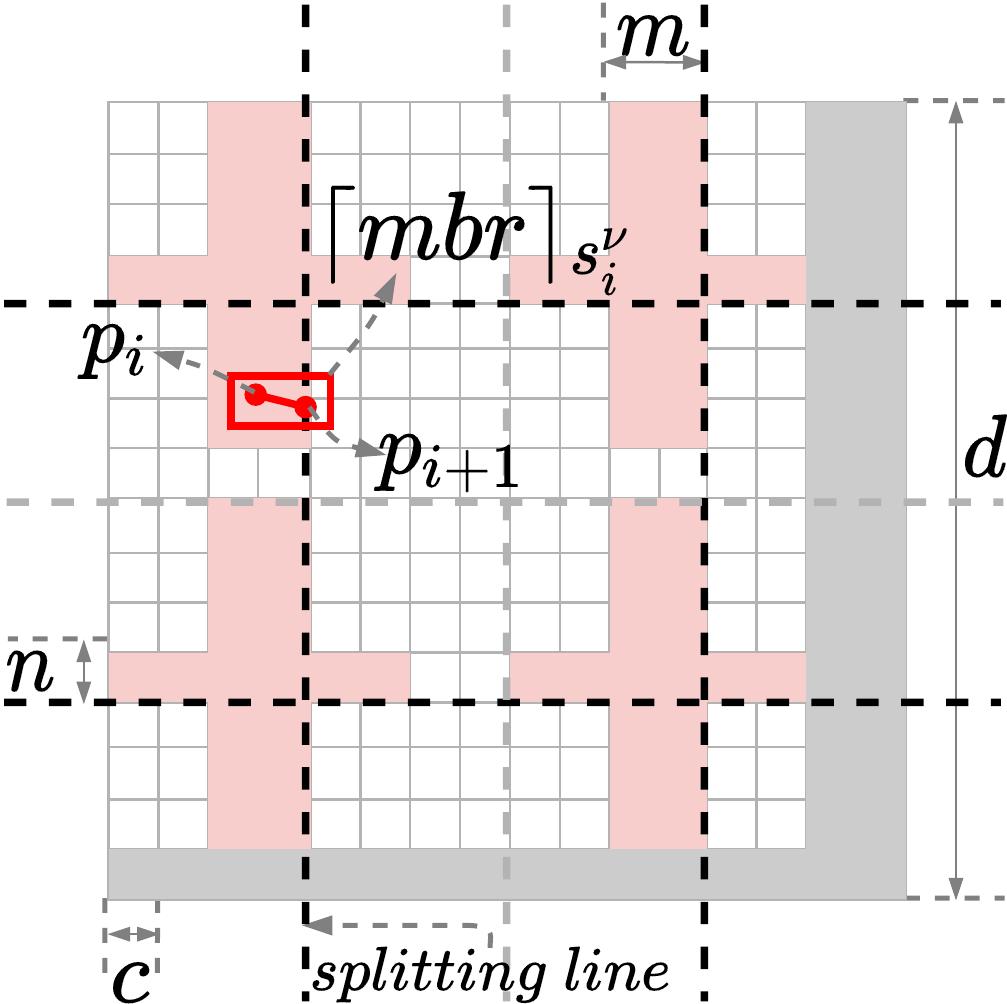}
    }
    \caption{Space division in the  index}\label{fig:backtrace_complexity_level_all}
\end{figure}


First, we derive the backtracking distance $bd$ (i.e., the number of tree levels) between the leaf nodes and the common parent node of two adjacent points. 
We use the expectation $\EX{(bd)}$ to measure the average cost, which is obtained by summing up different backtracking lengths weighted by their probabilities. 

We use Fig.~\ref{fig:backtrace_complexity_level_all} to illustrate how to derive $\EX{(bd)}$ with a space division in the quasi-quadtree. 
The width of the expanded MBR covering the (red) query segment 
is $m$ and the height is $n$, where $m \geqslant n$. 
Let the side length of the space be $d$ and that of a cell be $c$. Here, each cell corresponds to a leaf node. 
In our quasi-quadtree, once the node capacity is exceeded, a node will be divided into four sub-nodes. The space covered by the node is split into four quadrants. We call the vertical and horizontal boundaries (the black dash lines) between the sub-spaces  the \emph{splitting lines}. 

Different splitting lines divide the space at different index levels. The backtracking distance can be derived by the levels of the splitting lines that intersect  with the 
expanded MBR. 
To derive whether an expanded MBR intersects with a splitting line, we  compare the location of the upper left vertex of the MBR with the splitting line. 
In Fig.~\ref{fig:backtrace_complexity_level_0}, when the upper left vertex of the MBR falls in the pink area, the query segment intersects with a splitting line at level 0. The common parent node is the root of the quasi-quadtree, and $bd$ is the height $h$ of this tree, i.e., $\floor{\log_{2}\frac{d}{c}}$.
Fig.~\ref{fig:backtrace_complexity_level_1} shows the level-1 splitting lines, where the backtracking distance is $h - 1$ if the expanded MBR intersects with them. 

Next, we consider the stop condition of segment intersection.
When $m$ and $n$ are large, a query segment may  intersect with splitting lines at multiple lower levels, while common parent nodes cannot be at such levels. 
Let the lowest feasible level of the common parent node be $I$, 
$I = \floor{\log_2\frac{d}{2m}}$. This is because, given a pivot of a query segment in a node, we need at most a common parent node with an MBR side length of $2m$ to also cover the other pivot of the segment. 
In Fig.~\ref{fig:backtrace_complexity_level_1}, if $m > 4c$ and $n > 4c$, the query segment intersects with splitting lines at levels 1 to 4. We only need to consider the case where the query segment intersects with the splitting lines at level 0. 

The probabilities of different backtracking lengths are derived by the ratio of the data space occupied by the pink area of different levels.   
We cut off the grey area in  Fig.~\ref{fig:backtrace_complexity_level_all}, as the query segment is outside the space when its expanded MBR is in this area. 

Then, we compute $\EX(bd)$ by summing up the product of the probability and the backtracking distance at every level:
\vspace{-1mm}
\begin{equation} \label{eq:expected_bd}
\begin{aligned}
    \EX(bd) = \sum\nolimits_{i=0}^{I} \frac{[m(\frac{d}{2^i}-n)+n(\frac{d}{2^i}-m)-mn] \cdot 4^i}{(b-m)(b-n)} \cdot (h - i)
\end{aligned}
\end{equation}
where the fraction part is the probability determined by the size of the pink area, and $h-i$ is the distance between a leaf node and the common parent node. 
We let $\Lambda = \frac{1}{(b-m)(b-n)}$ and then expand Equation~\ref{eq:expected_bd}:
\begin{equation} \label{eq:expected_bd_expand}
\begin{aligned}
    \EX(bd) & = \Lambda \cdot \sum\nolimits_{i=0}^{I} \big\{[m(\frac{d}{2^i}-n)+n(\frac{d}{2^i}-m)-mn] \cdot 4^i \cdot (h - i) \big\} \\
            & = \Lambda \big[ h\sum\nolimits_{i=0}^{I}(2^{i}md+2^{i}nd-3\cdot4^{i}mn) - \sum\nolimits_{i=0}^{I}(2^{i}mdi+2^{i}ndi\\
            & \;\;\; -3\cdot4^{i}mni) \big]
\end{aligned}
\end{equation}
Then, we expand the two terms in square brackets in  Equation~\ref{eq:expected_bd_expand} separately. For the first term, by summing up the geometric series, we can have: 
\vspace{-1mm}
\begin{equation} \label{eq:expected_bd_part1}
\begin{aligned}
        & h\sum\nolimits_{i=0}^{I}(2^{i}md+2^{i}nd-3\cdot4^{i}mn) \\
    ={} & h[md(2^{I+1}-1)+nd(2^{I+1}-1)-mn(4^{I+1}-1)] \\
    ={} & h[md(\frac{d}{m}-1)+nd(\frac{d}{m}-1)-mn(\frac{d^2}{m^2}-1)] \\
    ={} & h(d-m)(d-n)
\end{aligned}
\end{equation}
The second term can be expanded as:
\vspace{-1mm}
\begin{equation} \label{eq:expected_bd_part2}
\begin{aligned}
        & \sum\nolimits_{i=0}^{I}(2^{i}mdi+2^{i}ndi-3\cdot4^{i}mni) \\
    ={} & 2I(d-m)(d-n) - (I+1)\sum\nolimits_{i=1}^{I}(2^{i}md+2^{i}nd-3\cdot4^{i}mn) \\
    ={} & 2I(d-m)(d-n) - (I+1)(d-2m)(d-2n) 
\end{aligned}
\end{equation}
Lastly, we plug Equation~\ref{eq:expected_bd_part1}, \ref{eq:expected_bd_part2} into Equation~\ref{eq:expected_bd_expand}:
\begin{equation} \label{eq:expected_bd_final}
\begin{aligned}
    \EX(bd) & = \Lambda \big[h(d-m)(d-n) - 2I(d-m)(d-n) \\
    & \;\;\; + (I+1)(d-2m)(d-2n) \big] \\
            & = h - 2I + (I+1)\frac{(d-2m)(d-2n)}{(d-m)(d-n)} \\
            & \leqslant h - I + 1
\end{aligned}
\end{equation}
So far, we have that $\EX{(bd)}$ is up to $h-I+1$.

\textbf{Total cost. } To query a trajectory, STS-Join iteratively takes $O((h-I)\bar{|\mathcal{T}|})$ node accesses for backtracking, since only one node is visited at each level, where $\bar{|\mathcal{T}|}$ is the number of segments in a query trajectory.
Besides, it visits $O(4^{h-I}\bar{|\mathcal{T}|})$ nodes while searching for similar segments of a query segment. 
In total, the cost of STS-Join with backtracking is $O(4^{h-I} \bar{|\mathcal{T}|} |\mathcal{D}_q|)$.

\section{Experiments} \label{sec:exp}

\subsection{Experimental Setup}
All experiments are conducted on a 64-bit machine running Ubuntu 20.04 LTS with a 6-core AMD Ryzen 5 CPU, 32 GB RAM, and a 500GB SSD. All algorithms are implemented in C++ and run in main memory.

\textbf{Datasets.} To the best of our knowledge, there is no public contact-tracing datasets. We use two real trajectory datasets instead: \textbf{DiDi}~\cite{url_didi_dataset}
and \textbf{GeoLife}~\cite{url_geolife_dataset}.
DiDi contains vehicle trajectories in Chengdu, a capital city in China. 
We randomly sample trajectories recorded in the first week of November, 2016 to generate a dataset with 500k trajectories and 67.7 million segments (this is limited by our memory size). The trajectories come from 201k users (each is considered a client in the experiments), i.e., there are 2.5 trajectories per client on average. 
There are 135 segments per trajectory on average. The average length and time span of a segment are 30.0 meters and 4.5 seconds, respectively. 

GeoLife contains trajectories of different transport modes (e.g., walking, driving, and cycling) recorded mainly in Beijing, China. We only keep the trajectories within the Fifth Ring Road of Beijing. There are some 11k trajectories in this area and very few outside. We randomly sample 10k trajectories from them to form the GeoLife dataset used in the experiments. 
We shift the trajectory times into the same week without changing their time span to increase the density in time. The trajectories come from 182 users (clients), i.e., there are 55 trajectories per client on average.
There are 9.0 million segments in these trajectories and 896 segments per trajectory on average. The average length and time span of a segment are 12.2 meters and 9.2 seconds, respectively.  
The two datasets are summarized in Table~\ref{tab:datasets}. 
We further generate their random subsets for scalability tests. See Table~\ref{tab:exp_parameter} for the sizes of the subsets.


\setlength\tabcolsep{2pt} 
\begin{table}[htp]
\centering
\caption{Datasets} \label{tab:datasets}
\begin{tabular}{r|c|c|c|c}
\hline
\multirow{2}{*}{\textbf{Dataset}} & \multicolumn{4}{c}{\textbf{Configuration}}\\\cline{2-5}
& \# traj. & \# seg. & Avg. seg. length & Avg. seg. time span \\
\hline
\hline
DiDi  &  500k & 67,654k & 30.0 meters & 4.5 seconds\\ \cline{2-5} 
GeoLife  & 10k & 8,981k & 12.2 meters & 9.2 seconds
   \\ \hline
\end{tabular}
\end{table}

\textbf{Algorithms.}
There is no existing algorithm for STS-Join. We adapt two techniques which result in three baseline algorithms. Meanwhile, we replace the join predicates in these algorithms with ours, so \emph{all algorithms return the same accurate query results}. We test three variants of our proposed STS-Join algorithm to confirm the effectiveness of the proposed backtracking-based query algorithm and the CDDS-based pruning strategy.
\begin{itemize}
    \item \textbf{3DR}:
    The 3DR-tree~\cite{SpatialTemporalSTRree} extends the R-tree by adding time ranges as an extra dimension. We use it to index segments in $\mathcal{D}_p$ and run an R-tree join like algorithm for queries. 
    \item \textbf{ALSTJ}: As described in Section~\ref{sec:relatedwork}, 
    ALSTJ~\cite{join_subtraj_edbt2020} (an R-tree variant) is the closest study to ours, which also considers sub-trajectory join. We replace the join predicate in its query algorithm with ours to process STS-Join. 
    \item \textbf{ALSTJ-T}: The original ALSTJ index does not consider the time dimension. We improve it by adding a B-tree-like temporal index at its top levels like our STS-Index. 
    \item \textbf{STS}: STS denotes the naive STS-Join without backtracking or CDDS-based pruning as described at the start of Section~\ref{sec:traj_join_algo}.
    \item \textbf{STS-BT}: STS-BT is the backtracking-based STS-Join (no  pruning), described in Algorithm~\ref{algo:backtrace_join}.
    \item \textbf{STS-BTB}: STS-BTB further prunes matched trajectory segment pairs from being sent to the clients by a CDDS upper bound as described in Section~\ref{subsec:sim_bound}.
\end{itemize}
We set the node capacity of the baselines and the B-trees in STS-Index to 100.  We set the B-tree and the bitmap time interval lengths in STS-Index to 1,800 and 30 seconds, respectively.
The client-server model of STS-Join is  simulated, since it aims to enhance privacy but not computation capacity.

\textbf{Parameter settings.} 
Table~\ref{tab:exp_parameter} summarizes parameters tested in the experiments, where the default values are in bold.


We follow the closest work~\cite{join_subtraj_edbt2020} to compare the algorithm response times, and we also measure the I/O cost that is a widely used indicator in spatial indexing studies~\cite{btree, rrtre_vldb,rrtre_tods} if using external memory based implementation.
Besides, we study the communication cost by recording the number of segments sent to the clients for final verification and the number of clients receiving such segments.
\emph{All algorithms return the same result set under the same query setting, which verifies the correctness of our methods.}

\setlength\tabcolsep{4pt} 
\begin{table}[htp]
\centering
\caption{{Parameters and Their Values}}\label{tab:exp_parameter}
\begin{tabular}{l|l}
\hline
 \textbf{Parameter} & \textbf{Settings}  \\ \hline \hline
$|\mathcal{D}_p|$ (DiDi) & 100k, \textbf{200k}, 300k, 400k, 500k \\ \hline
$|\mathcal{D}_p|$ (GeoLife) & 2k, \textbf{4k}, 6k, 8k, 10k \\ \hline
$|\mathcal{D}_q|$ (DiDi) & 100, \textbf{200}, 300, 400, 500 \\ \hline
$|\mathcal{D}_q|$ (GeoLife) & 2, \textbf{4}, 6, 8, 10 \\ \hline
$\delta_d$ & 10, \textbf{20}, 30, 40, 50  (meters)\\ \hline
$\delta_t$ & 0, 10, \textbf{20}, 30, 40, 50 (seconds) \\ \hline
$\theta_{sp}$ & 0, 10, \textbf{20}, 30, 40, 50 (meters) \\ \hline
$\theta_{ob}$ & 0, 10, \textbf{20}, 30, 40, 50 (meters) \\ \hline
\end{tabular}
\end{table}
\vspace{-3mm}

\subsection{Join Performance} 
We set the default sizes of the known DiDi and GeoLife datasets $\mathcal{D}_p$ to 200k and 4k, respectively.  
The query datasets $\mathcal{D}_q$ are randomly sampled from the original DiDi and GeoLife datasets with the same time ranges as the known datasets $\mathcal{D}_p$. Their sizes are set to be proportional to those of the respective known DiDi and GeoLife datasets. 
The default sizes of DiDi and GeoLife query datasets are 
 200 and 4 (i.e., $0.1\%$ of $|\mathcal{D}_p|$), respectively. The GeoLife query datasets might seem small, but there are some 900 segments per trajectory on average, which are sufficient to show the performance difference of the algorithms. 

\begin{figure}[h]
    \vspace{-3mm}
    \centering
    \subfloat[Response time (DiDi)~\label{fig:exp2_time}]{
    	\includegraphics[width=0.23\textwidth]{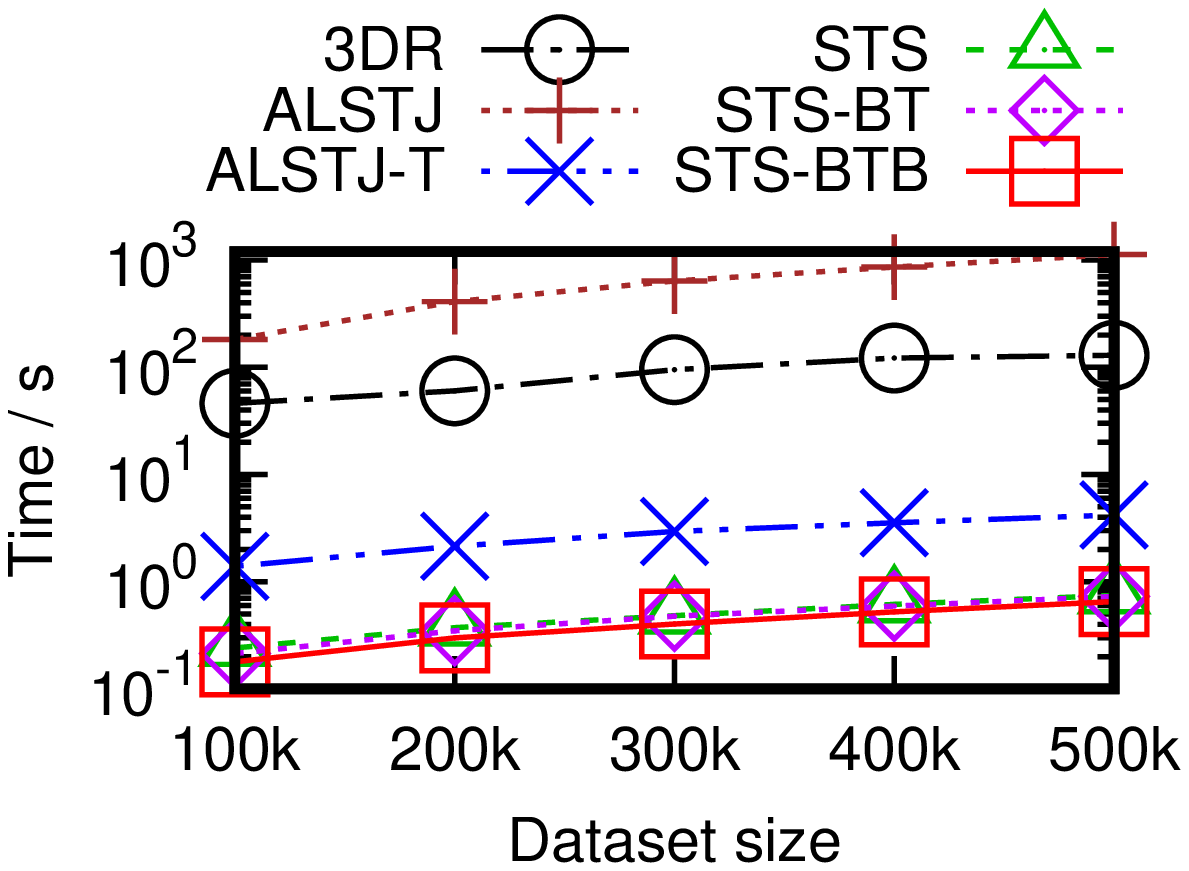}
    }
    \subfloat[{Number of node accesses (DiDi)}~\label{fig:exp2_io}]{
    	\includegraphics[width=0.23\textwidth]{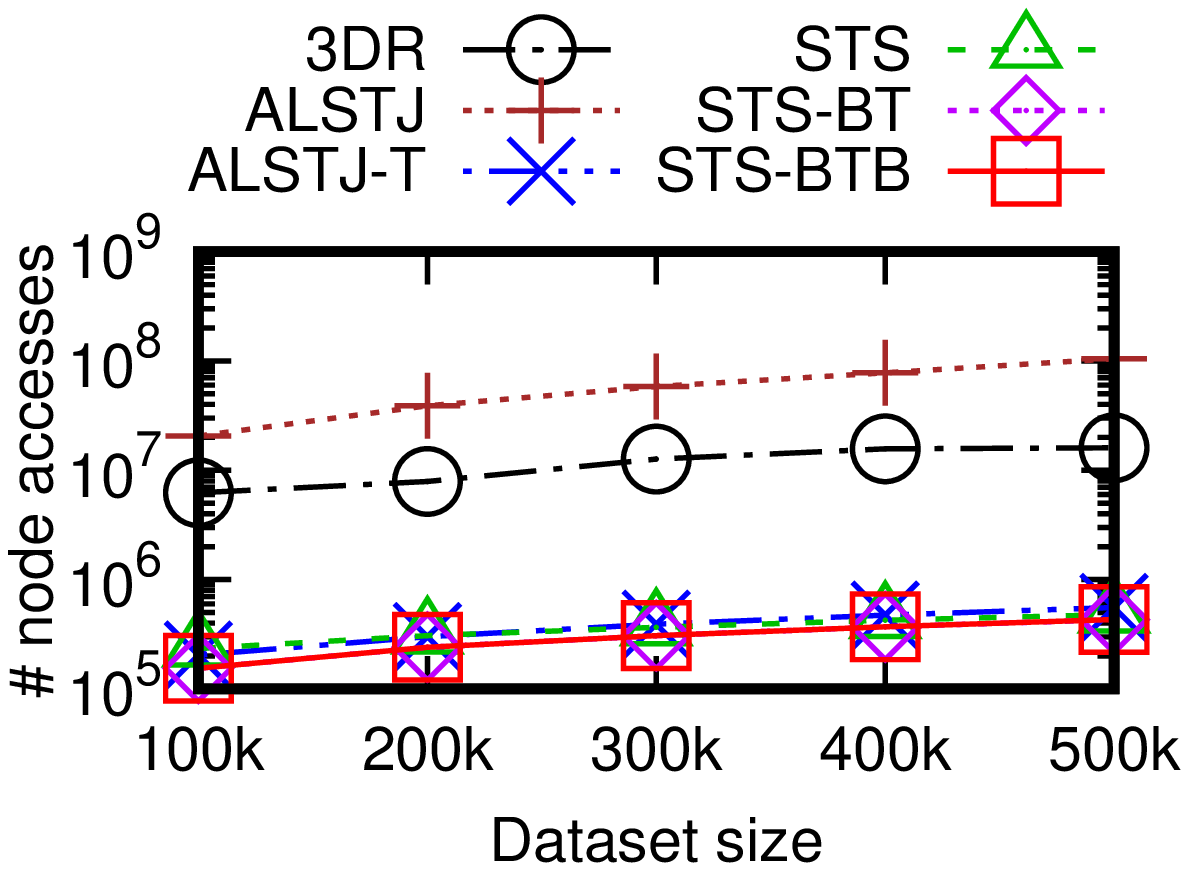}
    }\\
    \vspace{-3mm}
    \subfloat[{Response time (GeoLife)}~\label{fig:exp2_geolife_time}]{
    	\includegraphics[width=0.23\textwidth]{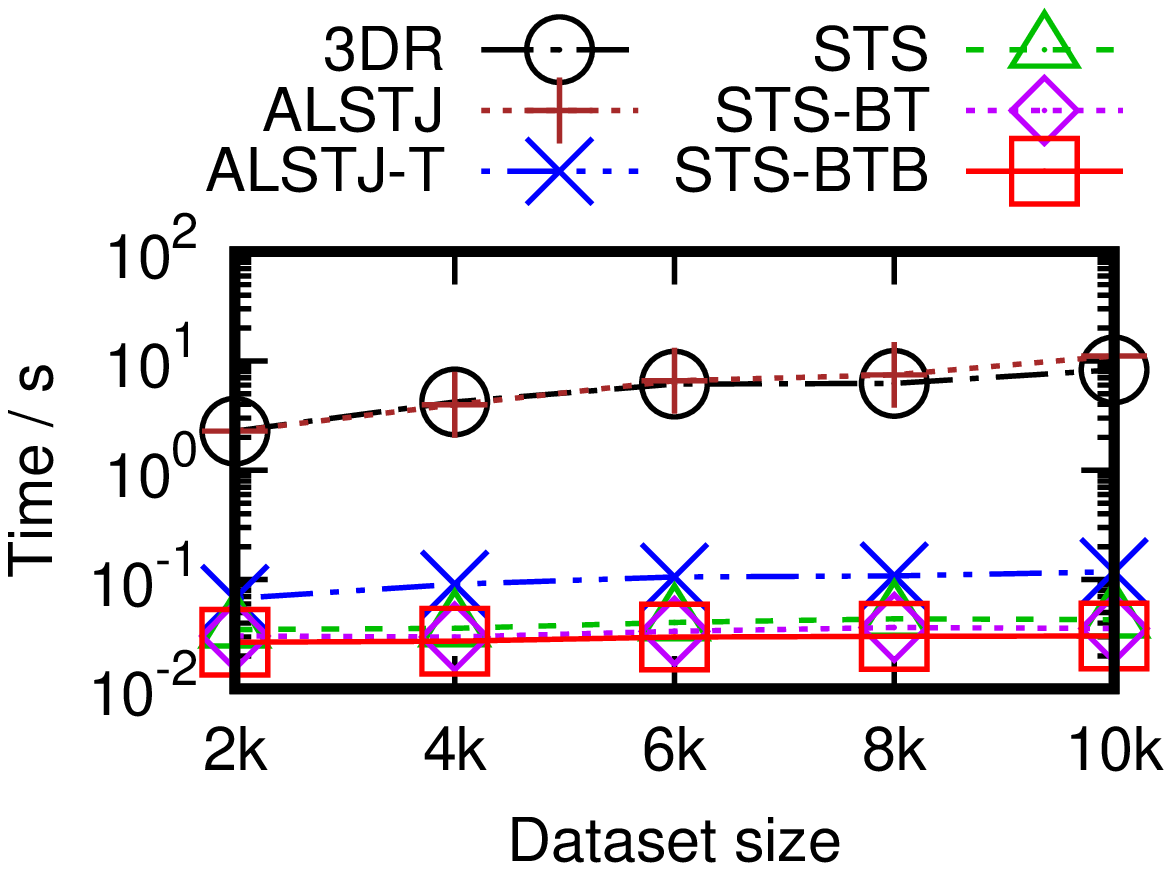}
    }
    \subfloat[{Number of node accesses (GeoLife)}~\label{fig:exp2_geolife_io}]{
    	\includegraphics[width=0.23\textwidth]{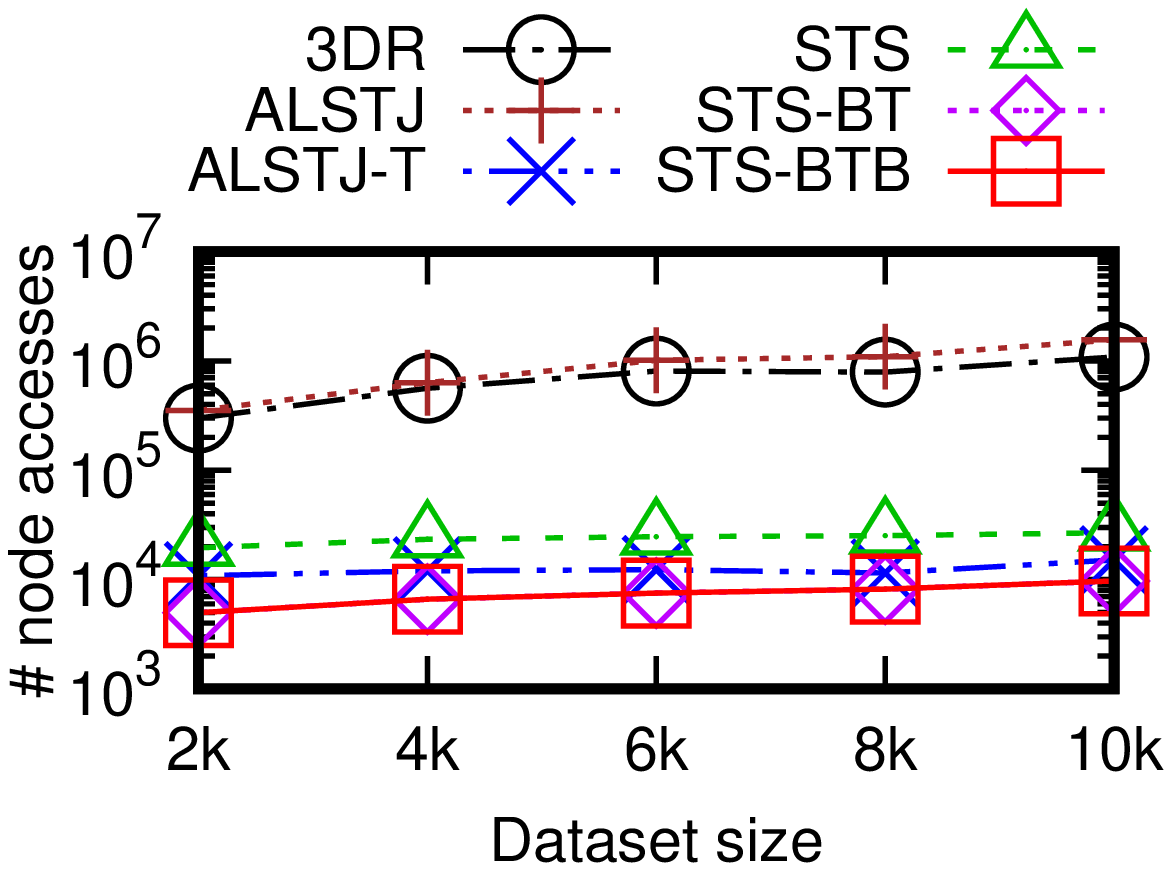}
    }
    \caption{Query costs vs. known dataset size}\label{fig:join_data_size}
\end{figure}

\textbf{Varying the known dataset size $|\mathcal{D}_p|$. }
Fig.~\ref{fig:join_data_size} shows the query costs, which increase with $|\mathcal{D}_p|$. Our STS-BTB algorithm with both backtracking and CDDS-based pruning outperforms all competitors consistently in response time. It also takes the fewest node accesses among all algorithms except our STS-BT, which has the same node accesses as STS-BTB since the CDDS-based  pruning does not impact node accesses. 
The advantage is up to three orders of magnitude. 
Comparing STS-BTB with STS-BT and STS, it achieves 
$\sim$14\% and $\sim$18\% lower running times,  which confirms the effectiveness of our CDDS-based pruning to reduce the computational costs.
STS has more node accesses ($\sim$24\% on average) than STS-BT, confirming the effectiveness of our backtracking strategy to reduce node accesses. 
STS has very similar running times to those of STS-BT, as the algorithms are running in memory, where the time saved on node accesses is lost in the more complex backtracking procedure of STS-BT. 
STS also outperforms the baselines in running time, while it has slightly more node accesses than ALSTJ-T when $|\mathcal{D}_p| \leqslant$ 200k on DiDi datasets. This is because ALSTJ-T has a much larger node capacity in its index, i.e., 100, while we use 4 in our quasi-quadtrees. The response time of ALSTJ-T is still up to 6.7 times larger than those of STS and STS-BT. This is because of its worse pruning power -- its index allows overlapping node MBRs while ours does not. ALSTJ is even worse, i.e., up to three orders of magnitude slower and two orders of magnitude more node accesses than STS-BTB, because of its lack of pruning power in the time dimension. 3DR considers the time dimension but is not optimized for trajectories.
It runs up to 240 times slower and has up to 43 times more node accesses than STS-BTB. 

\begin{figure}[h]
    \vspace{-6mm}
    \centering
        \subfloat[{Response time (DiDi)}~\label{fig:exp3_time}]{
    	\includegraphics[width=0.23\textwidth]{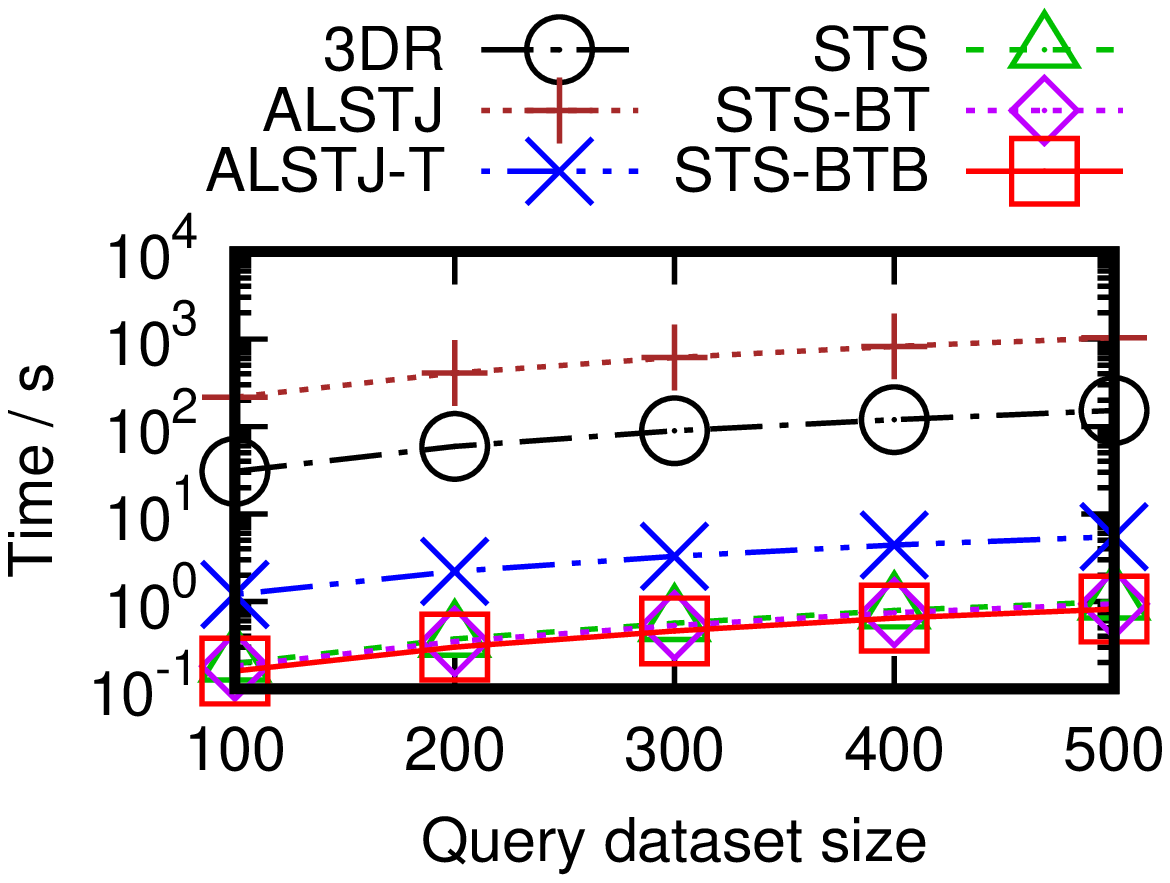}
    }
    \subfloat[{Number of node accesses (DiDi)}~\label{fig:exp3_io}]{
    	\includegraphics[width=0.23\textwidth]{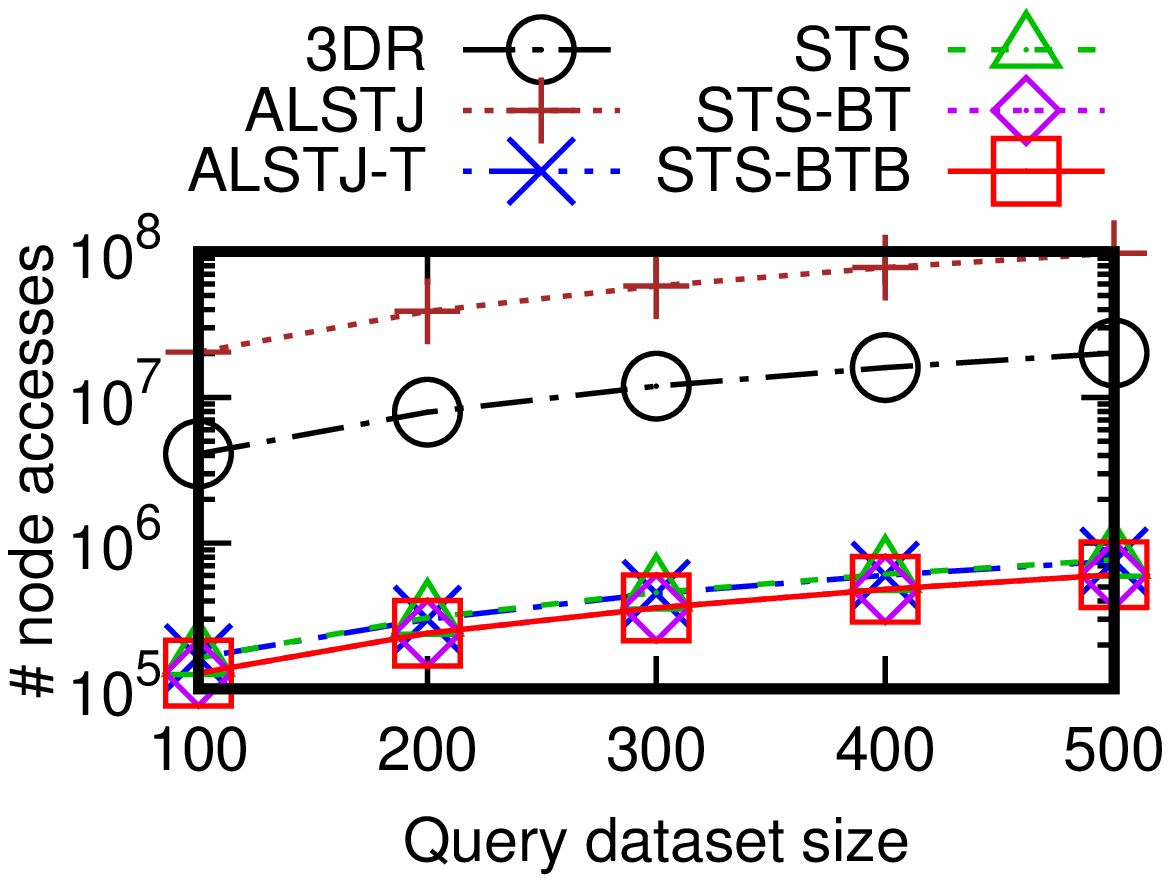}
    }\\
    \vspace{-2mm}
    \subfloat[{Response time (GeoLife)}~\label{fig:exp3_geolife_time}]{
    	\includegraphics[width=0.23\textwidth]{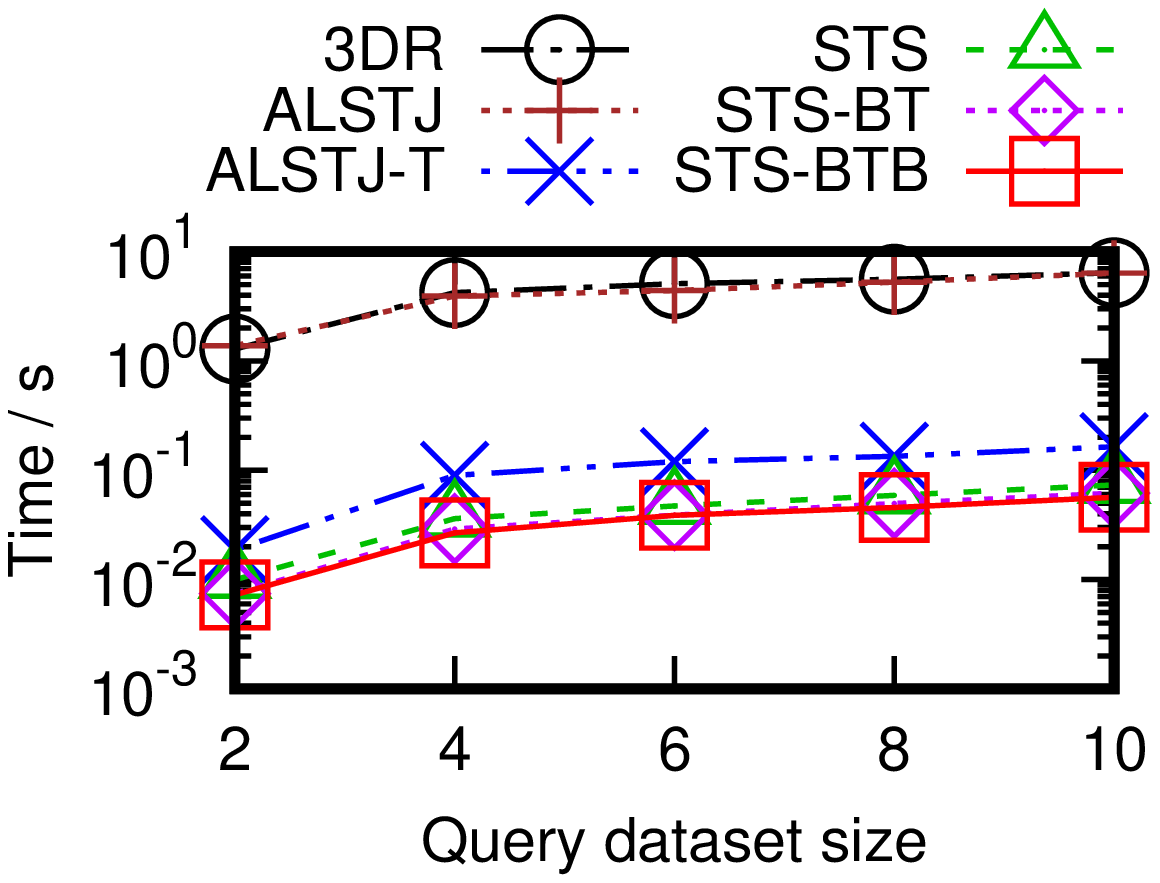}
    }
    \subfloat[{Number of node accesses (GeoLife)}~\label{fig:exp3_geolife_io}]{
    	\includegraphics[width=0.23\textwidth]{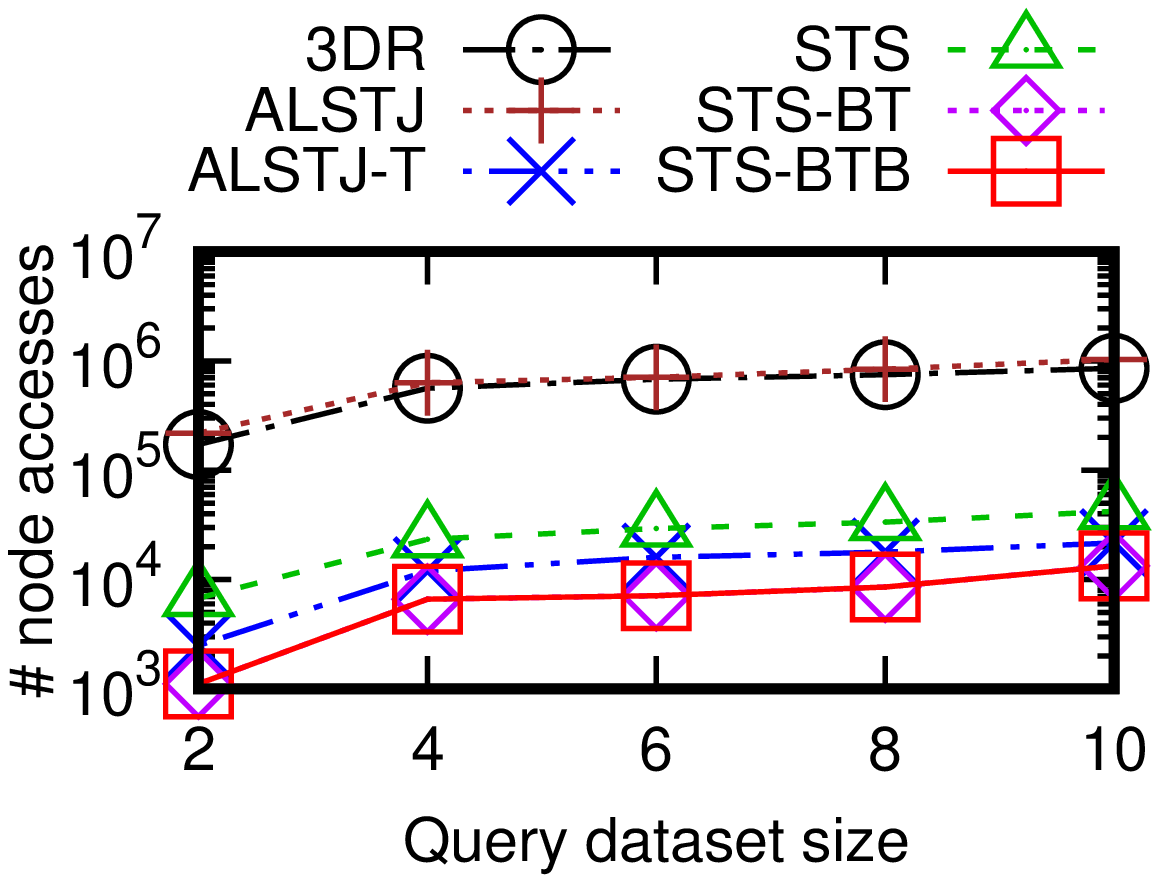}
    }
    \caption{Query costs vs. query dataset size}\label{fig:join_query_size}
\end{figure}

\textbf{Varying the query dataset size $|\mathcal{D}_q|$. }
Next, we vary the query dataset size $|\mathcal{D}_q|$. Fig.~\ref{fig:join_query_size} shows that the query costs increase with $|\mathcal{D}_q|$, which is expected. 
The relative performance among the algorithms is very similar to that when varying $|\mathcal{D}_p|$.
STS, STS-BT, and STS-BTB outperform all competitors on both datasets in response time (up to three orders of magnitude), while ALSTJ-T has a slightly smaller number of node accesses than that of STS. 
These confirm the superiority of the proposed algorithms and the effectiveness of the proposed backtracking and CDDS-based pruning techniques to reduce the algorithm costs. 

Since the query performance patterns are similar on both DiDi and GeoLife data, we omit the figures for GeoLife in the following query experiments for conciseness.

\begin{figure}[h]
    \vspace{-3mm}
    \centering
    \subfloat[{Response time}~\label{fig:exp4_1_time}]{
    	\includegraphics[width=0.23\textwidth]{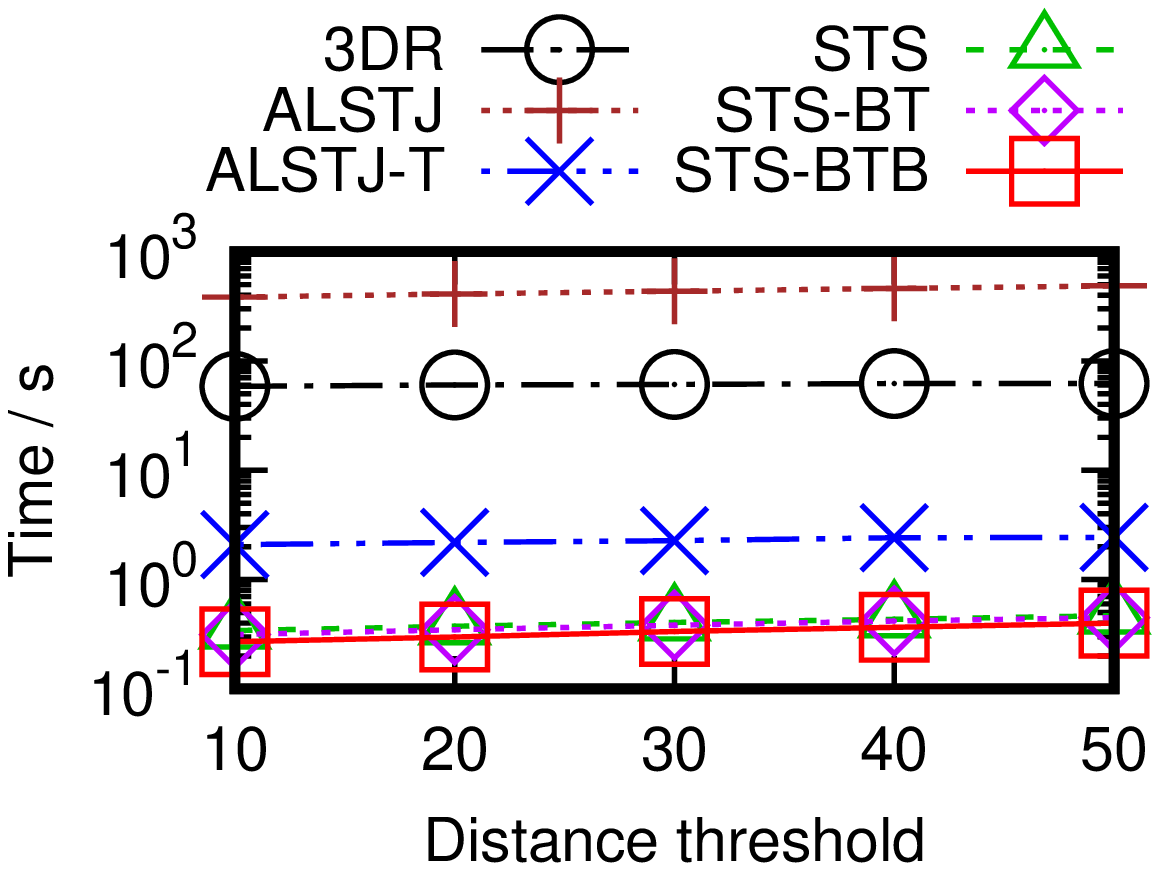}
    }
    \subfloat[{Number of node accesses}~\label{fig:exp4_1_io}]{
    	\includegraphics[width=0.23\textwidth]{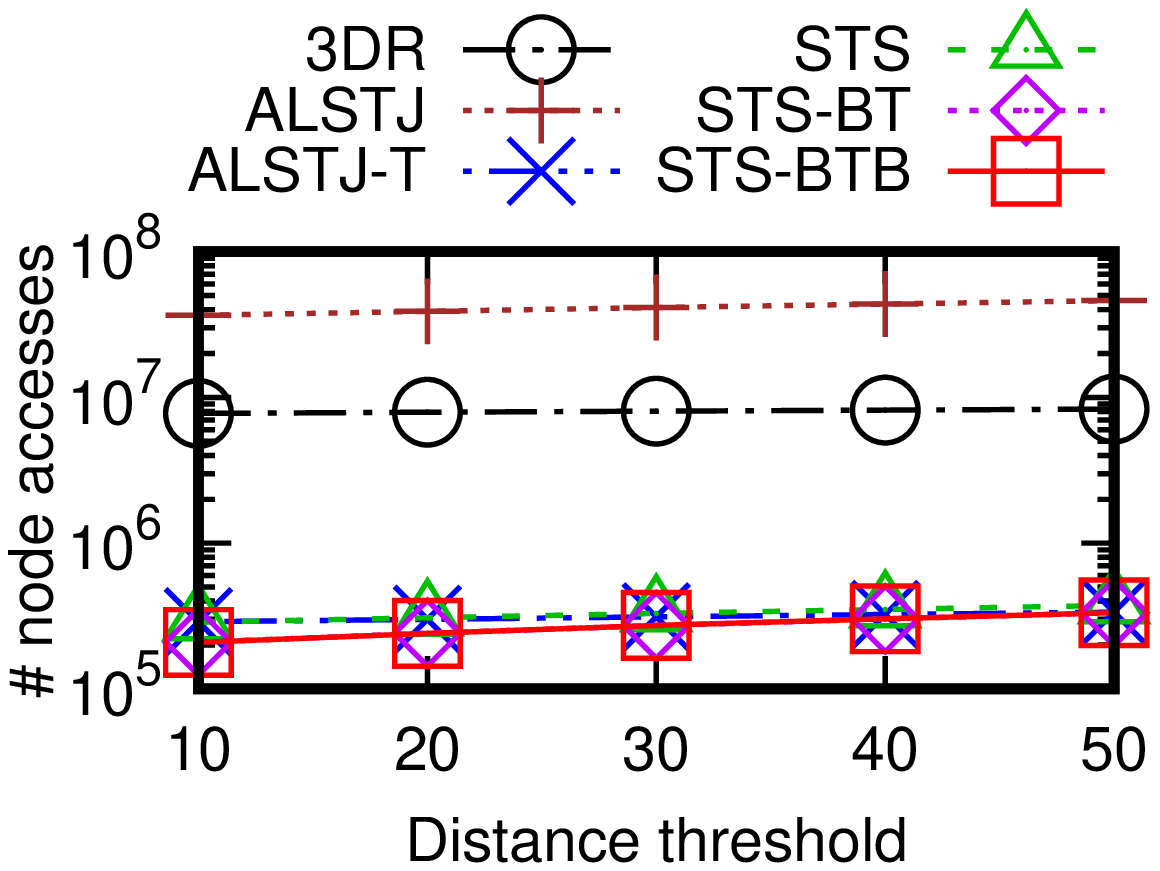}
    }\\
    \vspace{-2mm}
    \subfloat[Number of segments sent~\label{fig:exp8_1_segment}]{
    	\includegraphics[width=0.23\textwidth]{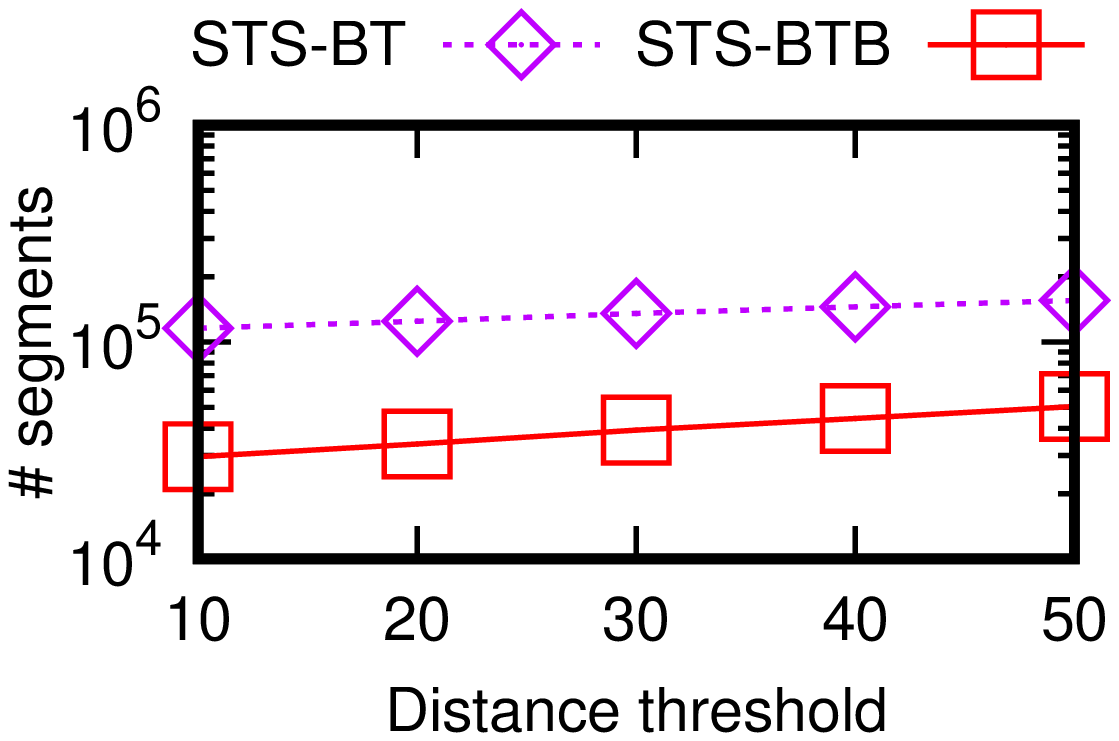}
    }
    \subfloat[Number of receiving clients~\label{fig:exp8_1_client}]{
    	\includegraphics[width=0.23\textwidth]{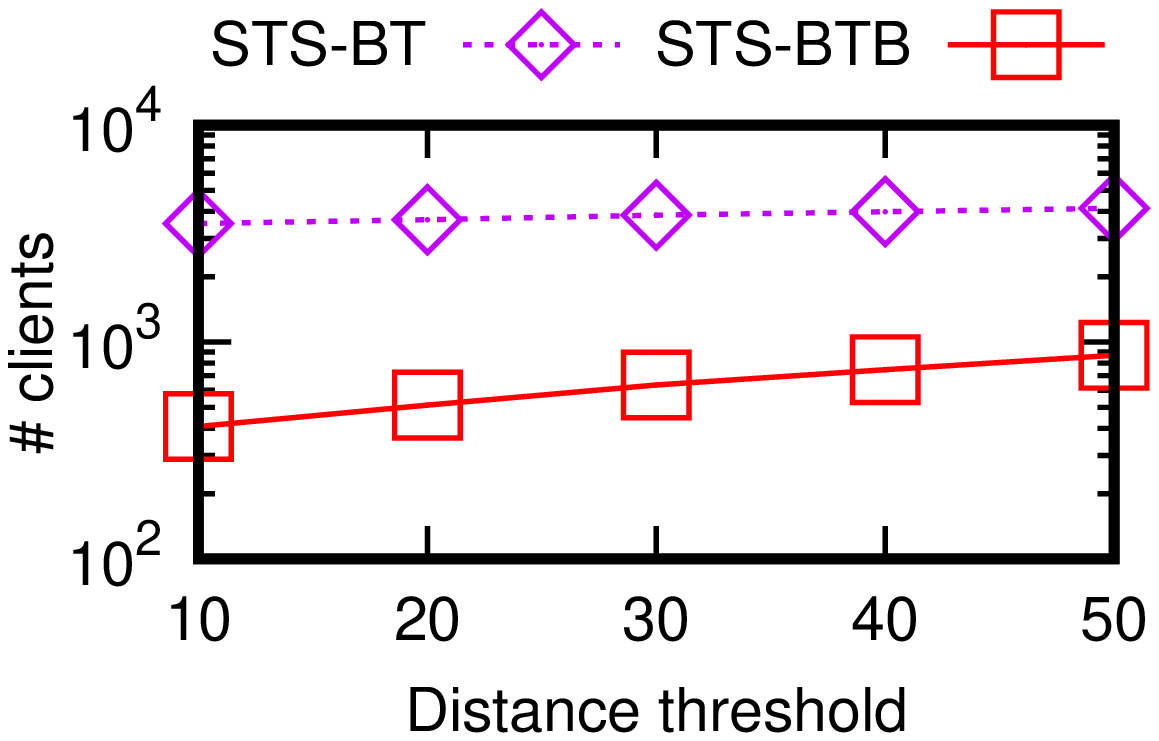}
    }
    \caption{Query costs vs. query distance threshold}\label{fig:join_distance_threshold}
\end{figure}

\textbf{Varying the query distance threshold $\delta_d$.}  Fig.~\ref{fig:join_distance_threshold} shows the algorithm performance as $\delta_d$ increases from 10 to 50 meters. The query costs of all algorithms increase with $\delta_d$, because more trajectory pairs satisfy a larger $\delta_d$ and need to be returned. 
The relative performance between the proposed STS algorithms and the competitors are similar to those observed above. Like before, STS-BTB is up to 22\% and 14\% faster than STS and STS-BT, respectively, while STS-BT and STS-BTB both have the fewest node accesses. We also study the impact of $\delta_d$ on the communication cost, which is reflected by the number of segments that are sent to the clients and the number of clients receiving at least one segment, as reported in 
Figs.~\ref{fig:exp8_1_segment} and~\ref{fig:exp8_1_client}, respectively.  
We see that, while the communication costs of STS-BTB  increase with $\delta_d$ which is expected, they are at least 3.1 and 4.8 times lower than those of STS-BT (for $\delta_d \le 50$) in terms of the number of segments sent and the number of clients receiving the segments, respectively. Here, STS has been omitted since it has the same communication costs as STS-BT, and the same applies to the figures below. 
The results further confirm the effectiveness of our CDDS-based pruning.

\begin{figure}[h]
    \vspace{-3mm}
    \centering
    \subfloat[{Response time}~\label{fig:exp4_5_time}]{
    	\includegraphics[width=0.23\textwidth]{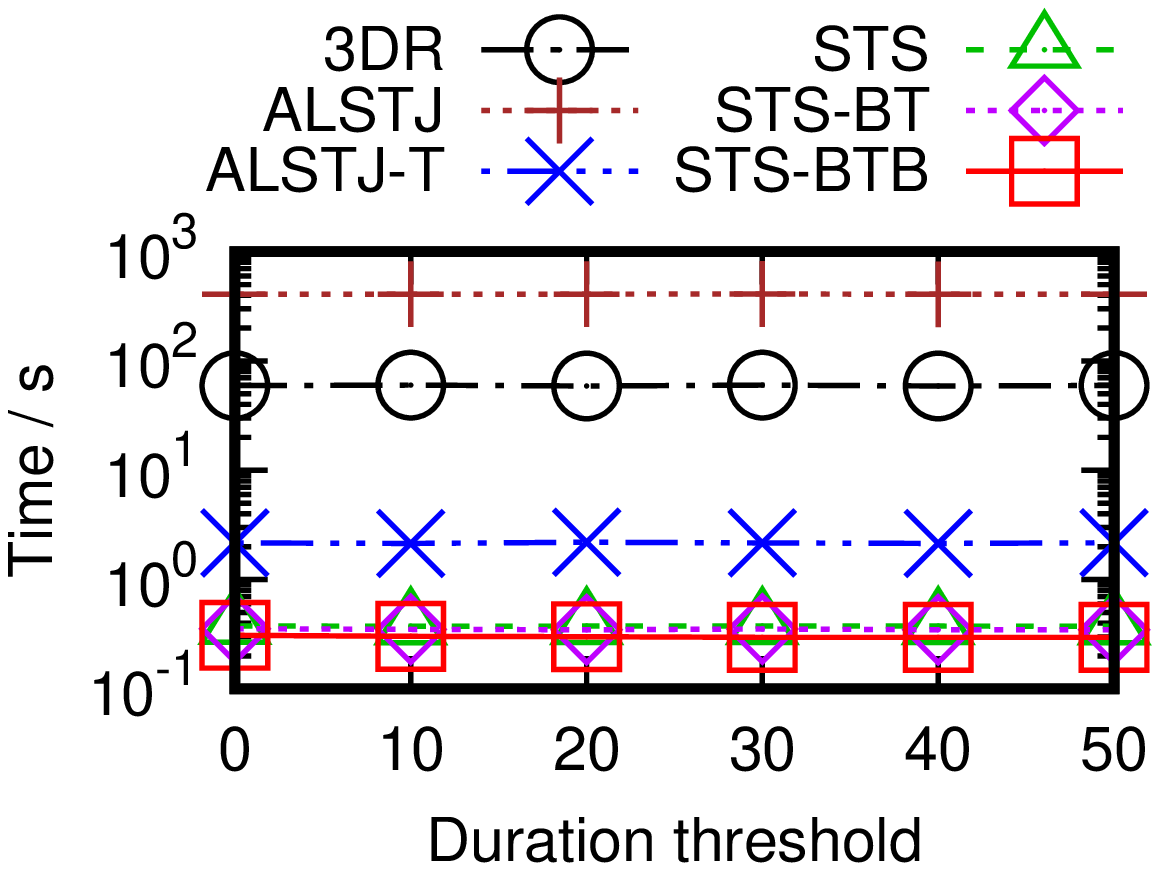}
    }
    \subfloat[{Number of node accesses}~\label{fig:exp4_5_io}]{
    	\includegraphics[width=0.23\textwidth]{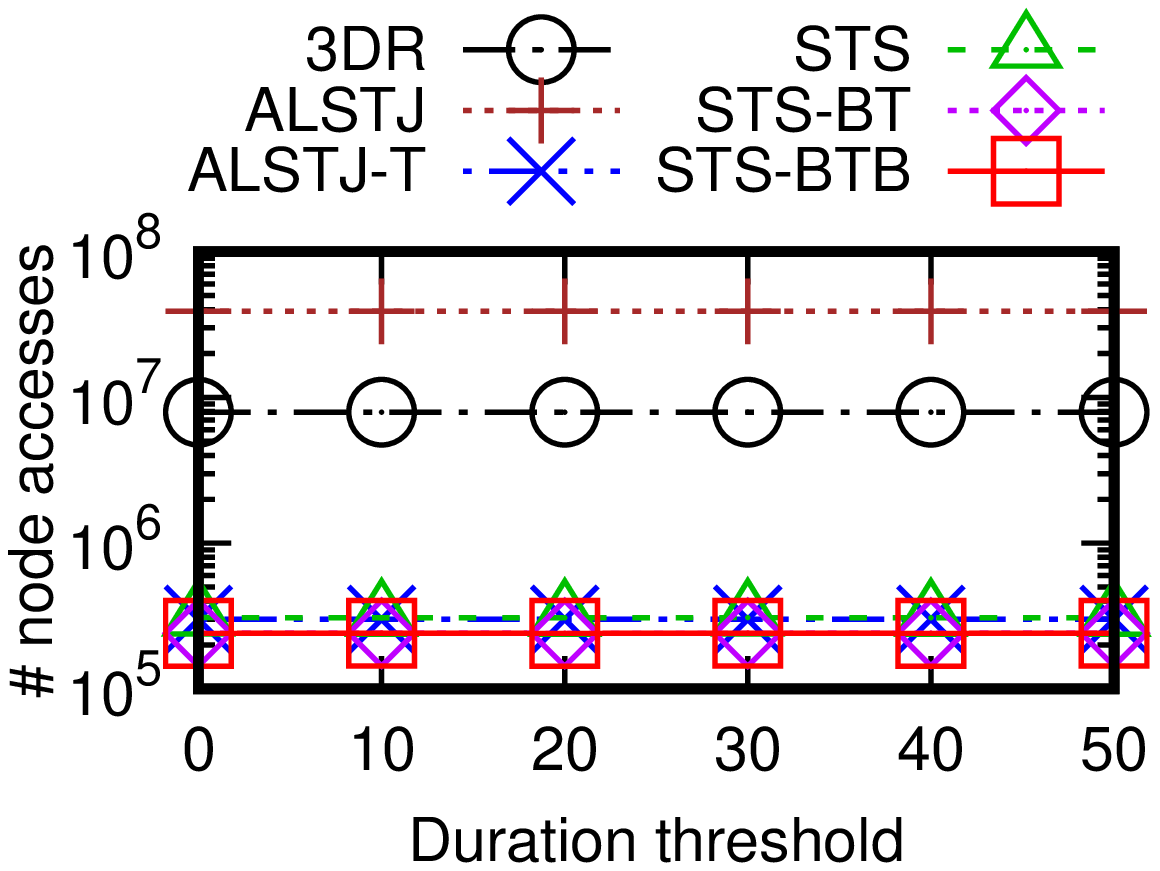}
    }\\
    \vspace{-2mm}
    \subfloat[{Number of segments sent}~\label{fig:exp8_5_segment}]{
    	\includegraphics[width=0.23\textwidth]{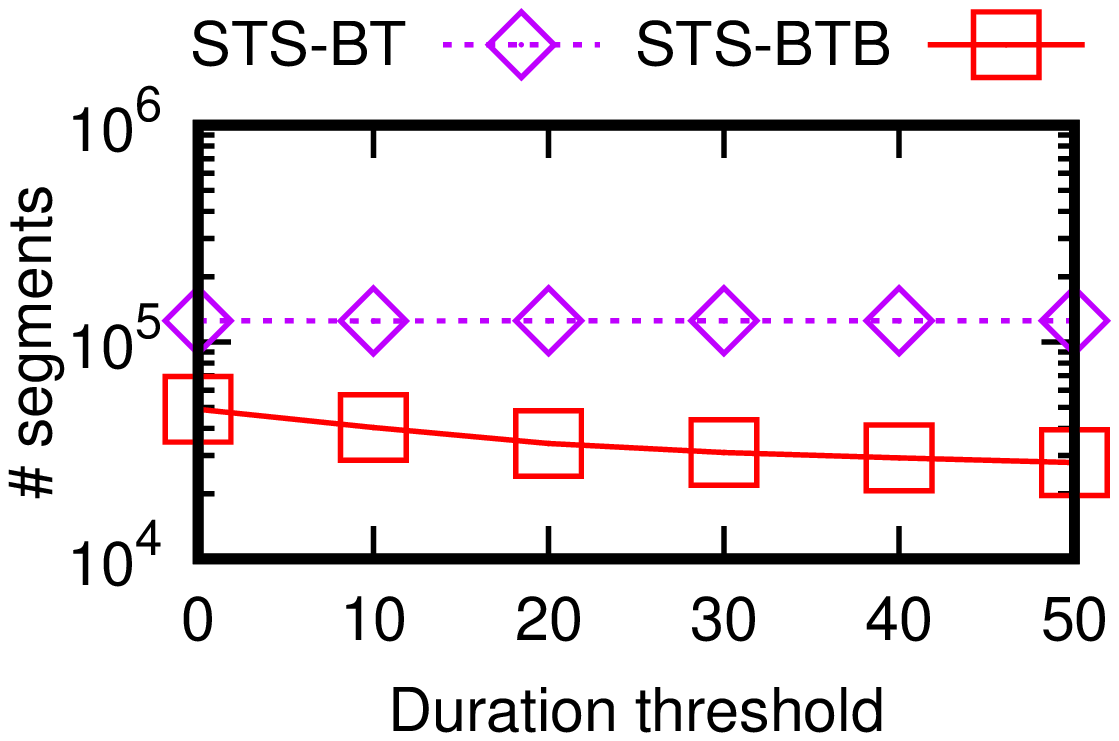}
    }
    \subfloat[Number of receiving clients~\label{fig:exp8_5_client}]{
    	\includegraphics[width=0.23\textwidth]{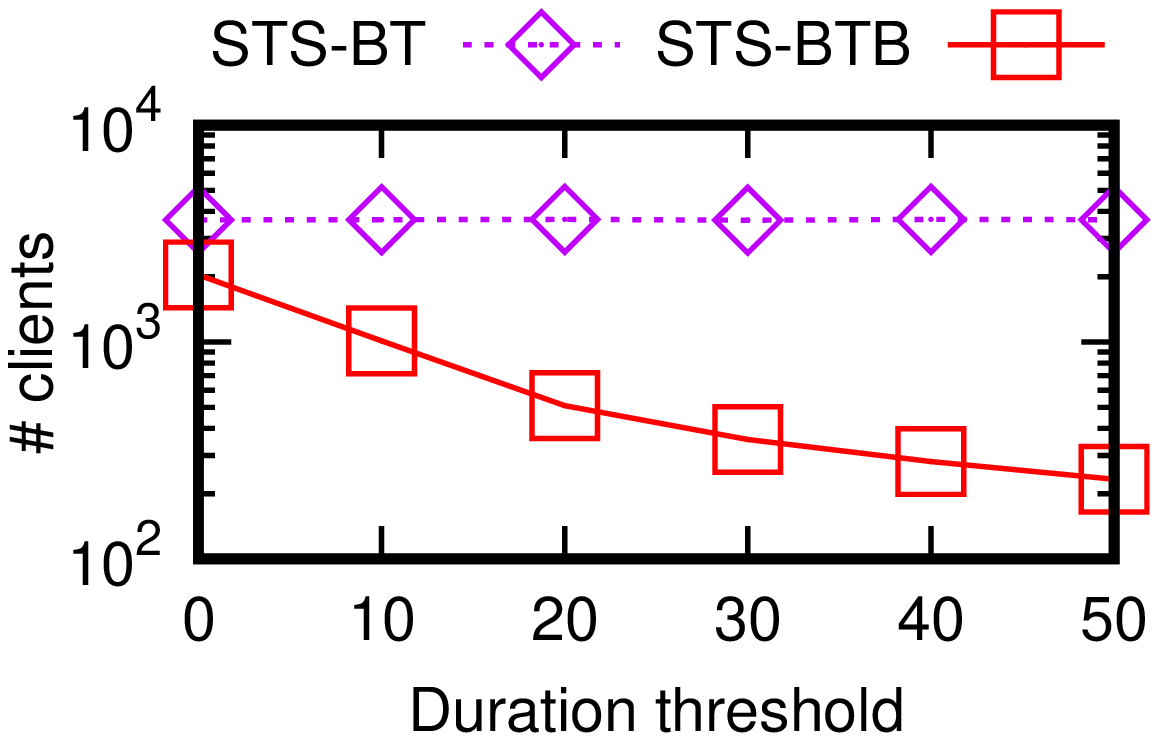}
    }
    \caption{Query costs vs. query duration threshold}\label{fig:join_duration_threshold}
\end{figure}

\textbf{Varying the query duration threshold $\delta_t$. }
Figs.~\ref{fig:exp4_5_time} and~\ref{fig:exp4_5_io} show that the join query costs are relatively stable as the query duration threshold $\delta_t$ increases from 0 to 50 seconds. This is because all algorithms compare the CDDS of trajectories with $\delta_t$ at almost their final stages. The value of $\delta_t$ has little impact on the time and node access costs. The relative performance among the algorithms is again very similar to that in the previous figures. 
From Figs.~\ref{fig:exp8_5_segment} and~\ref{fig:exp8_5_client}, we see that the communication costs of STS-BT is relatively stable, while those of STS-BTB decrease quickly as $\delta_t$ grows. This is because a larger $\delta_t$ makes the query predicate more difficult to be satisfied, even for the  CDDS used in pruning which has been inflated to guarantee no false negatives. This indicates that our CDDS-based pruning strategy may gain more pruning power in application scenarios with larger $\delta_t$ values.

\textbf{Varying the simplification threshold $\theta_{sp}$.} 
We also study the impact of the simplification threshold on STS-Join. As shown in Figs.~\ref{fig:exp4_2_time} and~\ref{fig:exp4_2_io}, when $\theta_{sp}$ increases from 0 to 50 meters, the query costs first drop and then increase for all STS algorithm variants. The drop is because introducing simplification helps reduce the number of segments in STS-Index and hence the query costs. However, as $\theta_{sp}$ grows, the data segments become longer which leads to larger node MBRs (and greater MBR expansion to ensure no false negatives) and reduced the index pruning power. This explains for the increase in query costs.
STS-BTB and STS-BT have the same node accesses since 
the CDDS-based pruning has no impact on node accesses. Meanwhile, STS-BTB again outperforms STS-BT in the communication costs, thus showing the robustness of STS-BTB (cf. Figs.~\ref{fig:exp8_2_segment} and~\ref{fig:exp8_2_client}).

\begin{figure}[h]
    \vspace{-6mm}
    \centering
    \subfloat[{Response time}~\label{fig:exp4_2_time}]{
    	\includegraphics[width=0.23\textwidth]{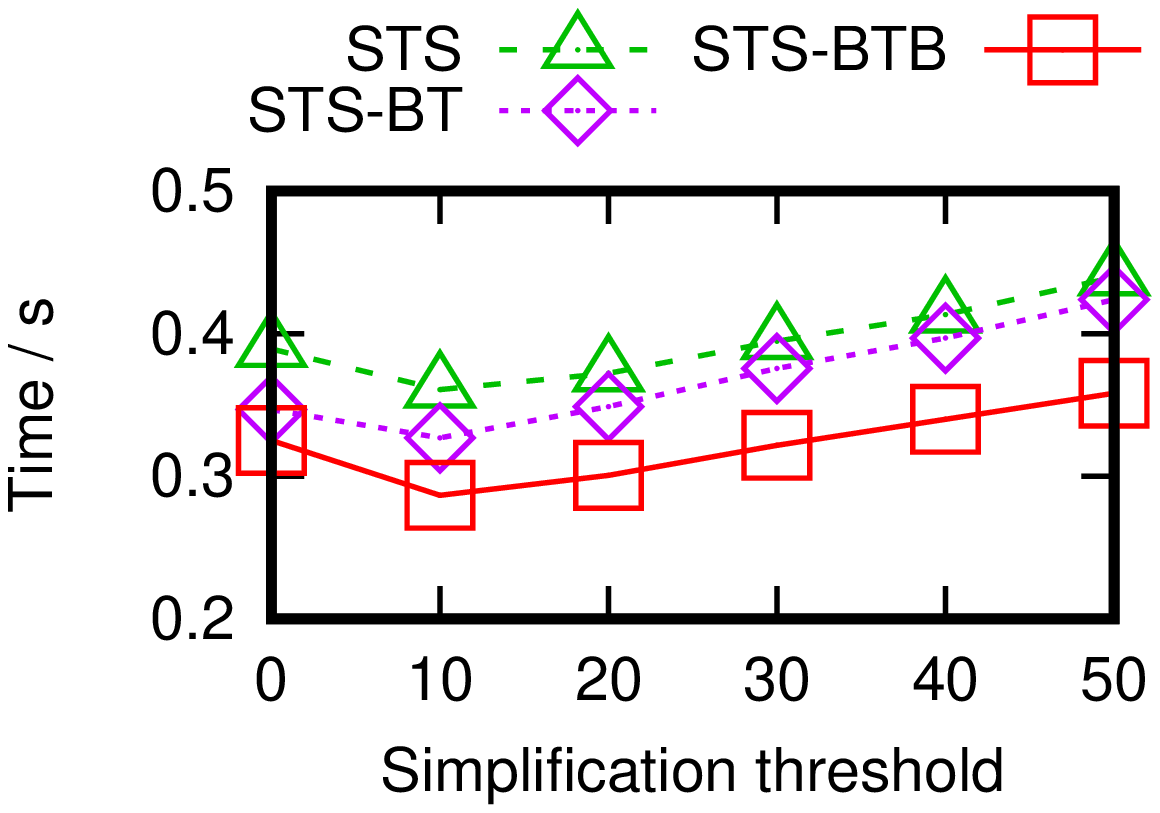}
    }
    \subfloat[{Number of node accesses}~\label{fig:exp4_2_io}]{
    	\includegraphics[width=0.23\textwidth]{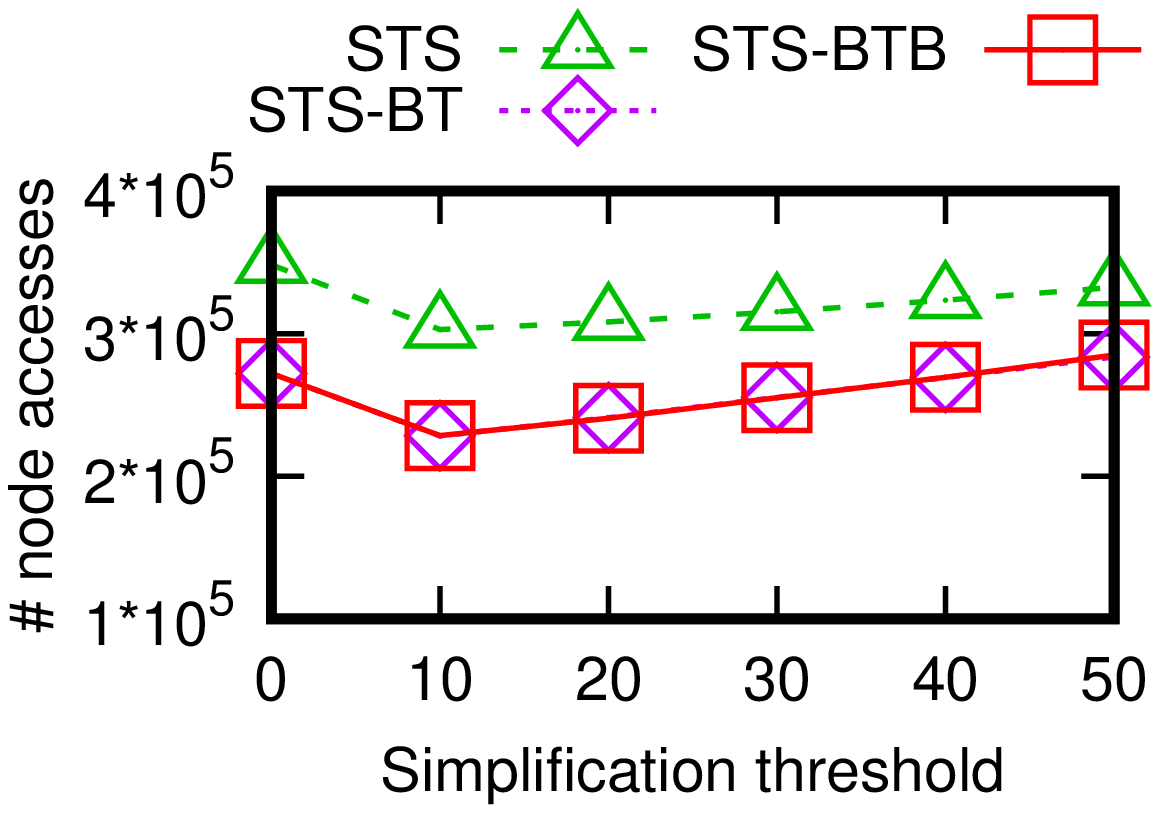}
    }\\
    \vspace{-3mm}
    \subfloat[{Number of segments sent}~\label{fig:exp8_2_segment}]{
    	\includegraphics[width=0.23\textwidth]{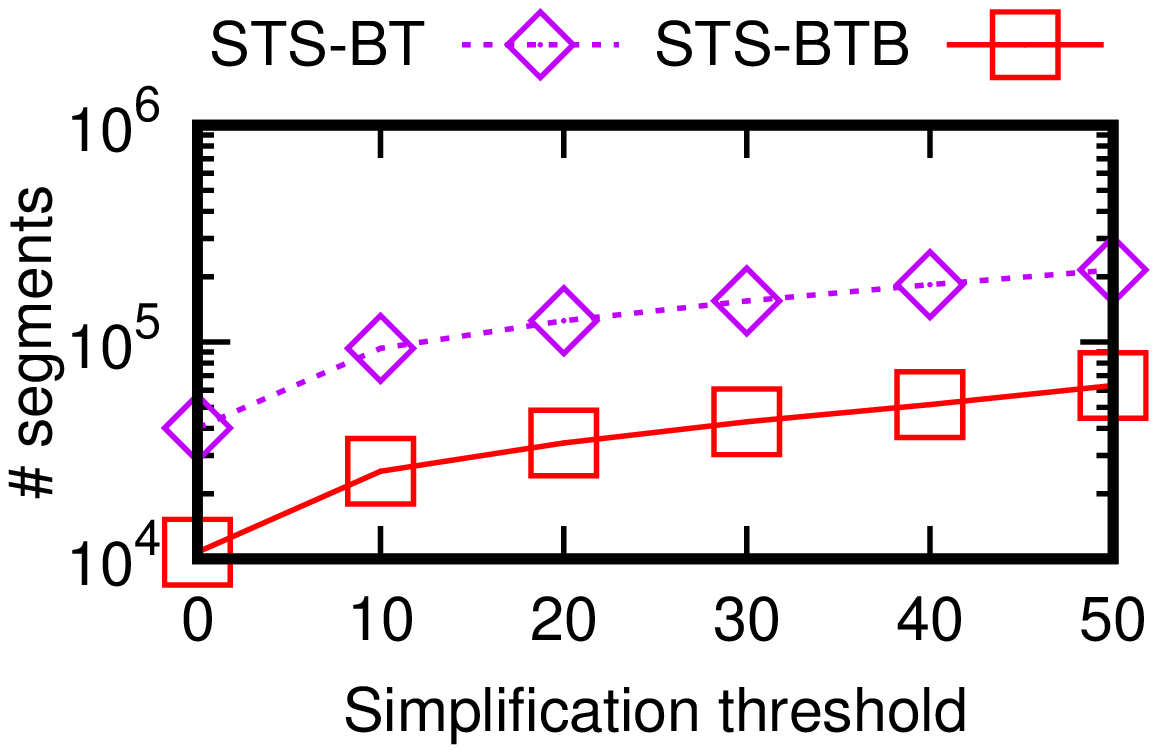}
    }
    \subfloat[Number of receiving clients~\label{fig:exp8_2_client}]{
    	\includegraphics[width=0.23\textwidth]{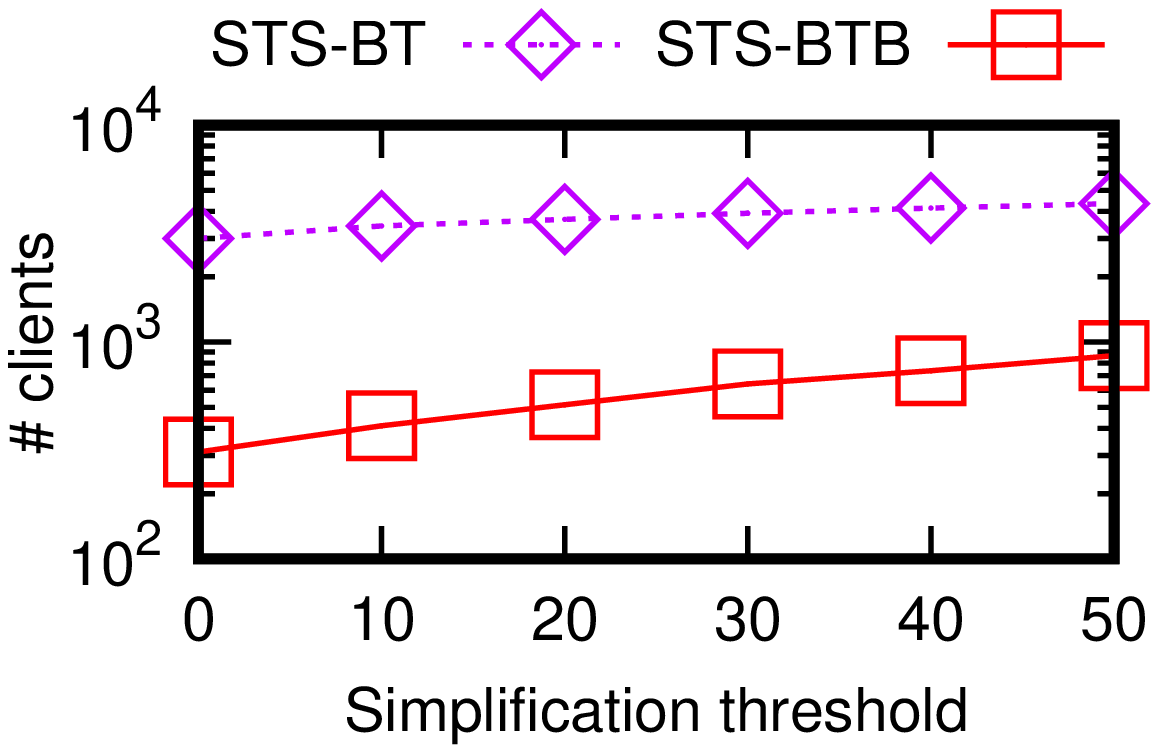}
    }
    \caption{Query costs vs. simplification threshold}\label{fig:join_simplification_threshold}
\end{figure}

\textbf{Varying the shifting distance $\theta_{ob}$. } 
Fig.~\ref{fig:join_shifting_distance} shows that the query costs of the STS algorithm variants increase with $\theta_{ob}$, as expected. 
A larger $\theta_{ob}$ may enlarge the node MBRs and hence leads to higher query costs. STS-BTB still takes the smallest time costs, number of node accesses, and  communication costs,  which have similar trends to those when varying $\theta_{sp}$, further confirming the robustness of the algorithm and the optimization techniques.

\begin{figure}[h]
    \vspace{-3mm}
    \centering
    \subfloat[{Response time}~\label{fig:exp4_3_time}]{
    	\includegraphics[width=0.23\textwidth]{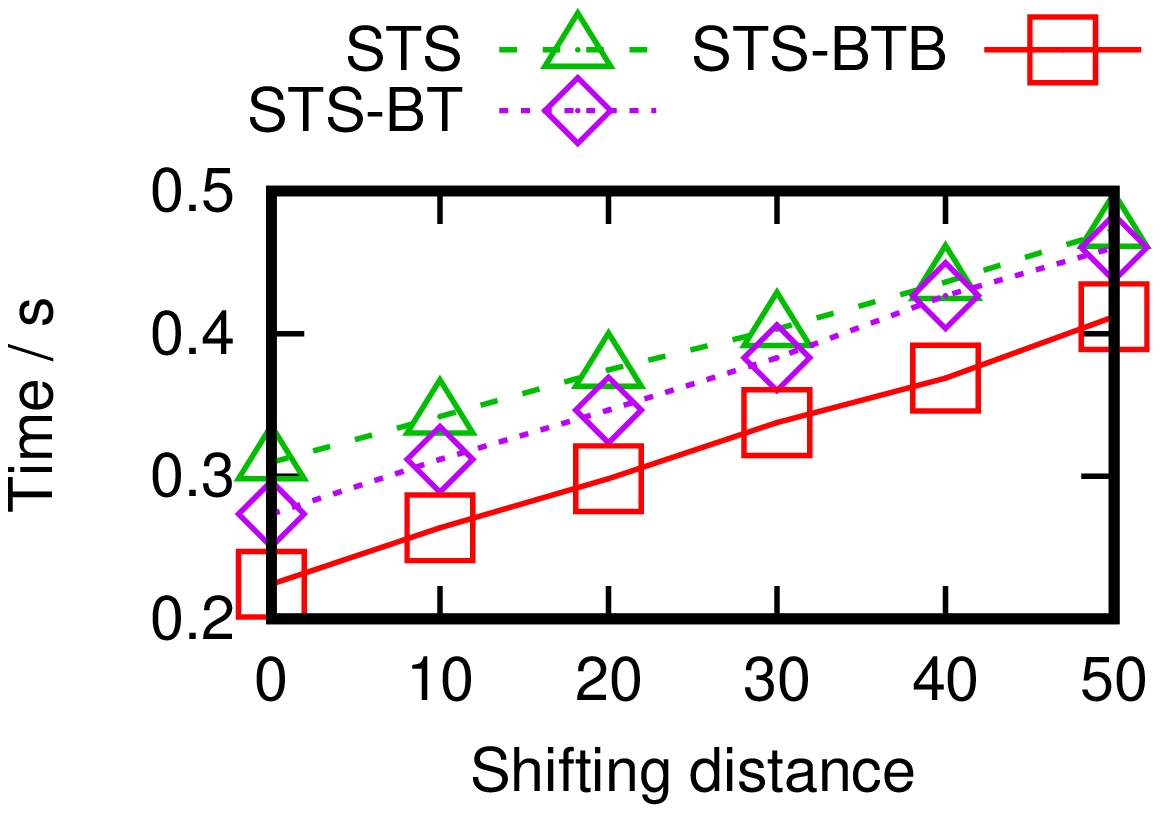}
    }
    \subfloat[{Number node accesses}~\label{fig:exp4_3_io}]{
    	\includegraphics[width=0.23\textwidth]{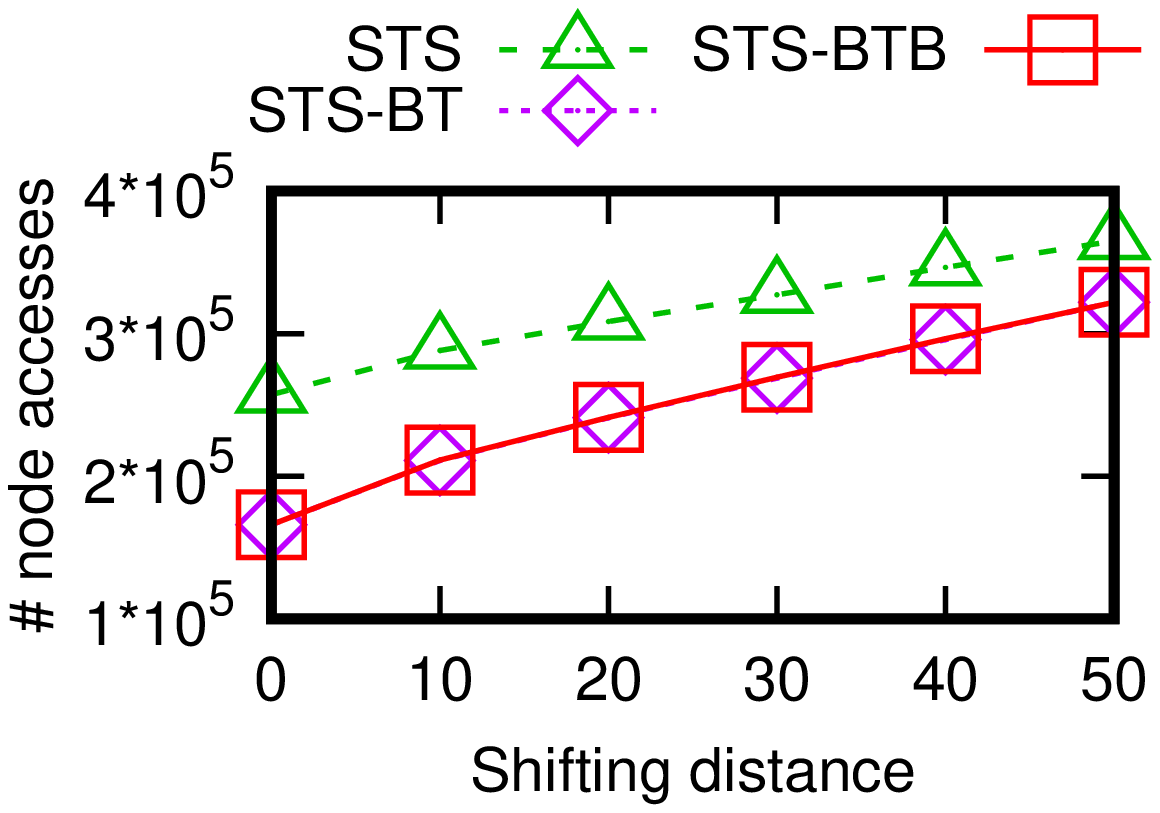}
    }\\
    \vspace{-2mm}
    \subfloat[{Number of segments sent}~\label{fig:exp8_3_segment}]{
    	\includegraphics[width=0.23\textwidth]{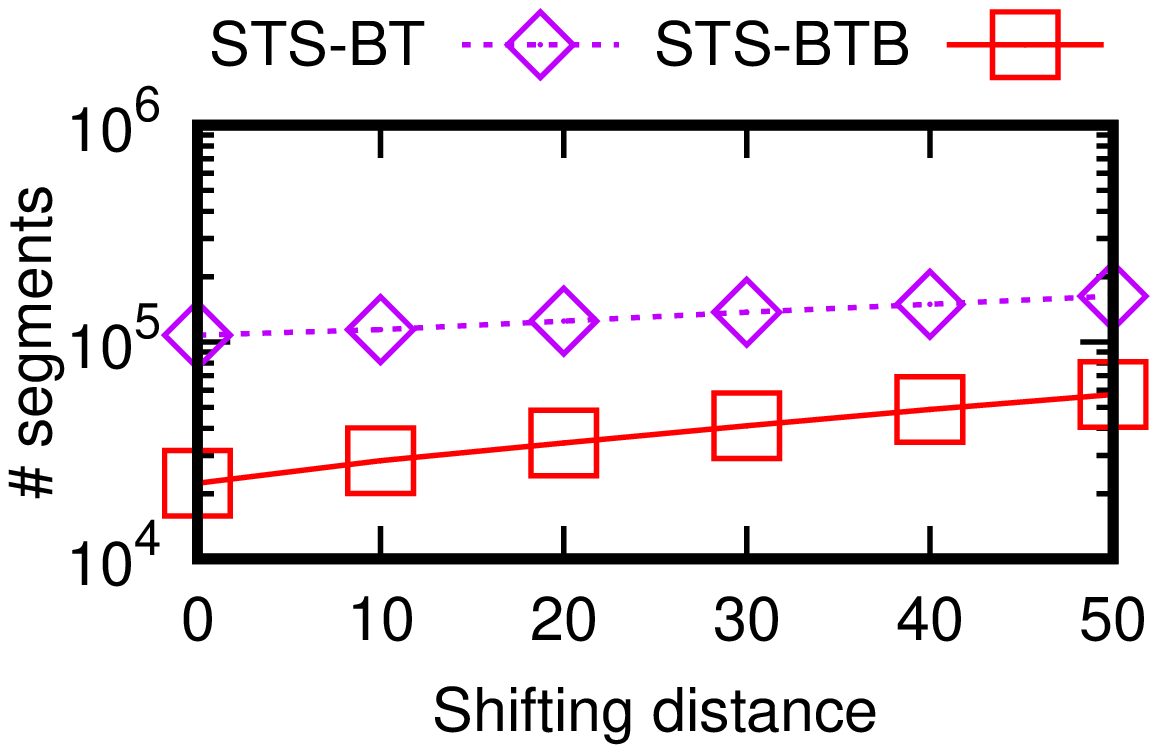}
    }
    \subfloat[Number of receiving  clients~\label{fig:exp8_3_client}]{
    	\includegraphics[width=0.23\textwidth]{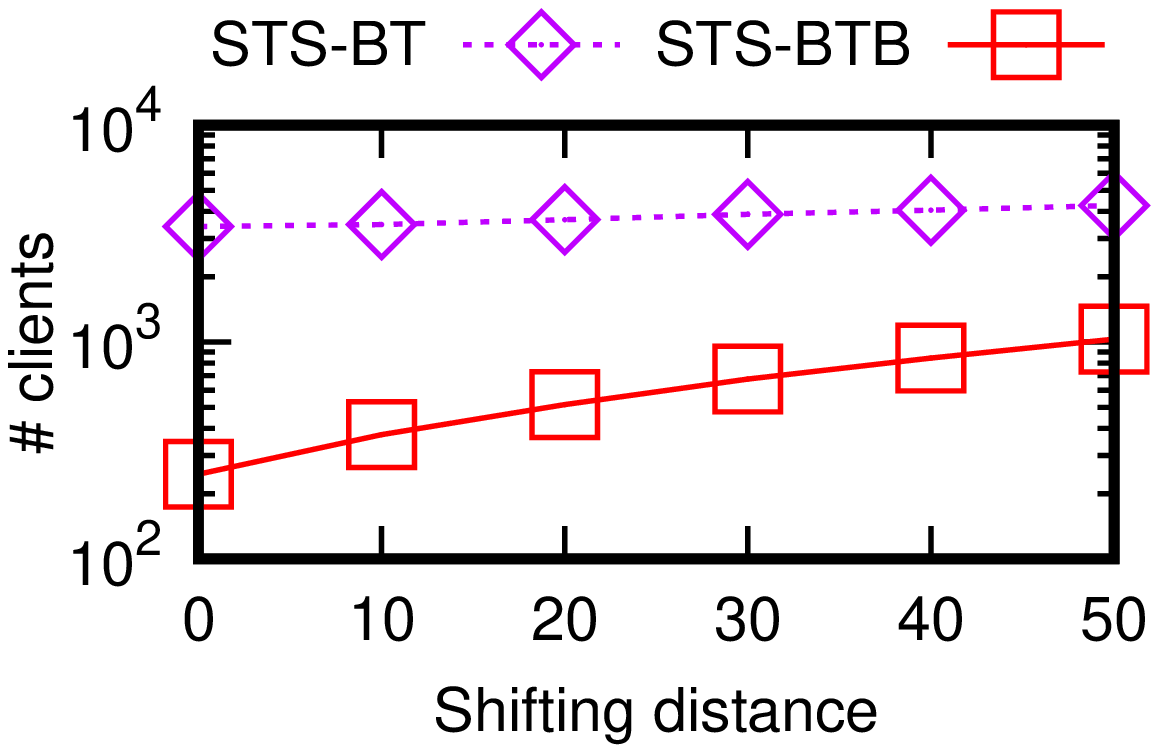}
    }
    \caption{Query costs vs. shifting distance}\label{fig:join_shifting_distance}
\end{figure}

When other STS-Index parameters are varied (the B-tree, and the bitmap time interval lengths), STS-BTB consistently outperforms STS-BT and STS. We omit the figures for conciseness.

\subsection{Index Construction Performance} 
We show the costs of index construction in Fig.~\ref{fig:index_dataset_size}. 
Since STS-BT and STS-BTB have the same process for index construction, we omit STS-BTB from the figure. 
STS-BT index is the fastest to build, which takes up to 15\% less time than ALSTJ-T.
STS-BT index also takes fewer node accesses to build than ALSTJ-T index does except when $|\mathcal{D}_p| \geqslant$ 400k
(recall that our index has smaller node capacities). 
STS takes more time and node accesses to build without backtracking, while ALSTJ suffers in its lack of pruning power in the time dimension for data insertion. 
3DR takes the largest costs to build, for that it is not optimized for trajectories. 

\begin{figure}[htp]
    \vspace{-6mm}
    \centering
    \subfloat[{Response time}~\label{fig:exp1_time}]{
    	\includegraphics[width=0.23\textwidth]{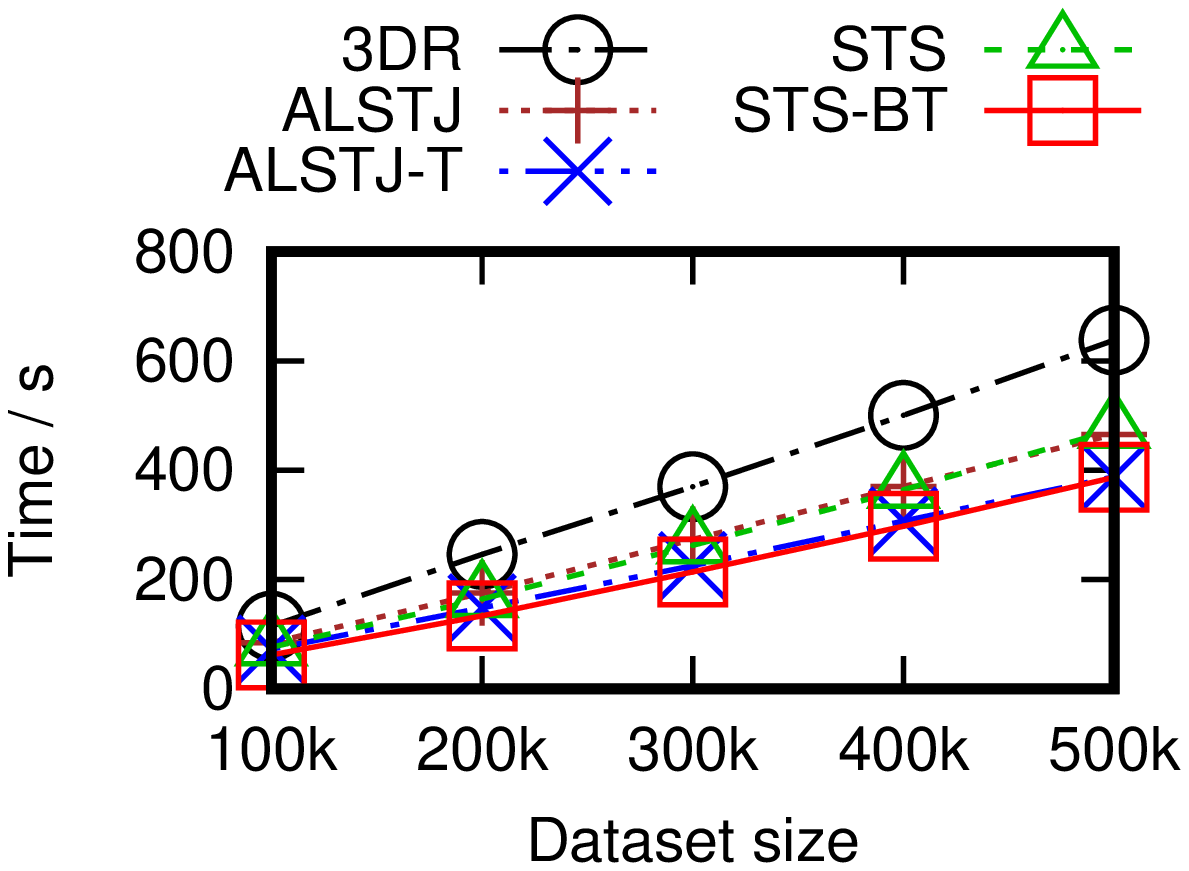}
    }
    \subfloat[{Number of node accesses}~\label{fig:exp1_io}]{
    	\includegraphics[width=0.23\textwidth]{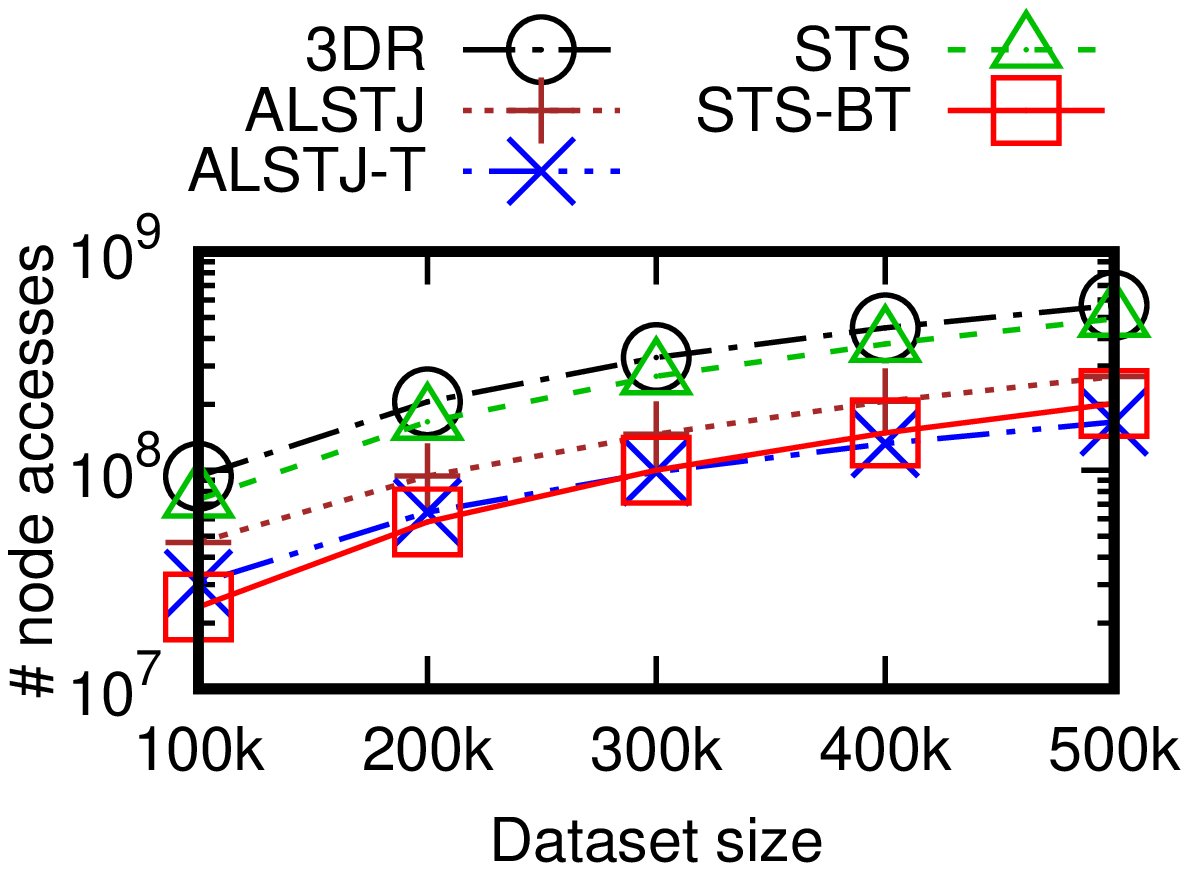}
    }\\
    \vspace{-2mm}
        \subfloat[{Index size}~\label{fig:exp_ram}]{
    	\includegraphics[width=0.23\textwidth]{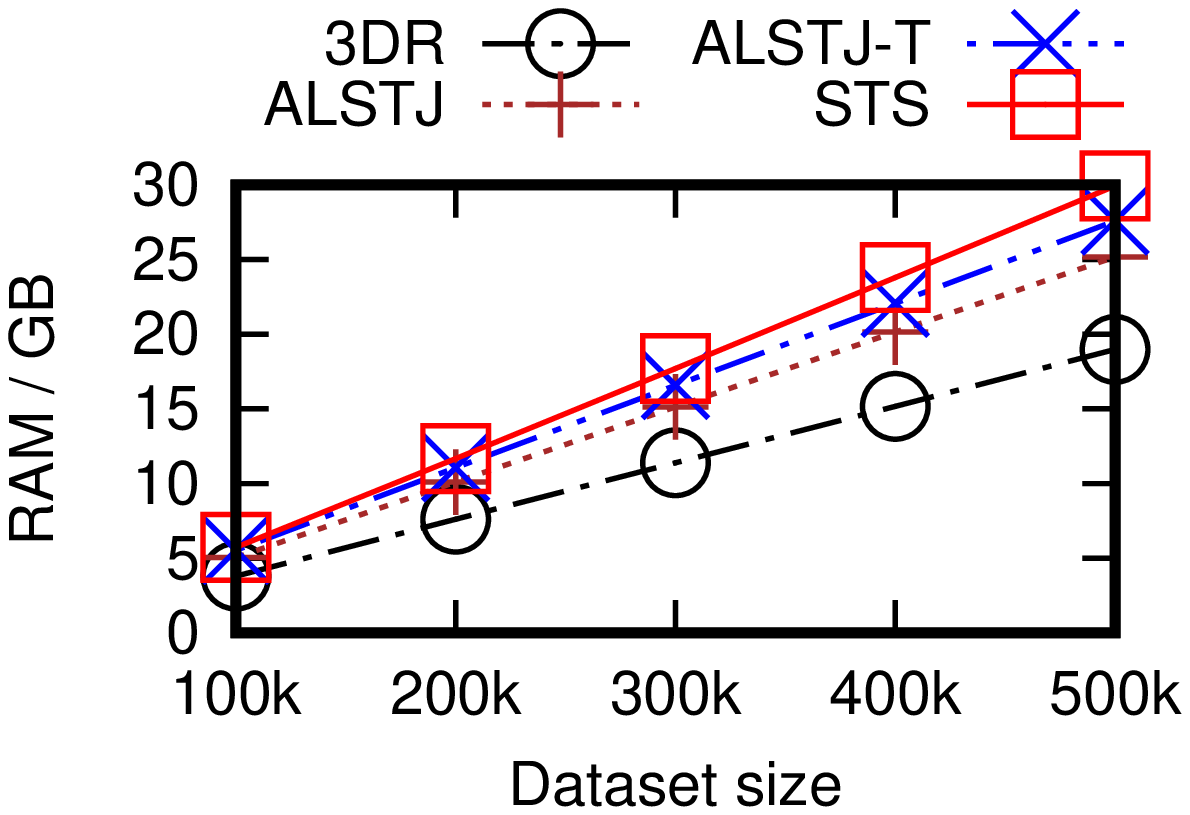}
    }
    \vspace{-1mm}
    \caption{{Index construction costs vs. dataset size}}\label{fig:index_dataset_size}
    \vspace{-1mm}
\end{figure}

We report the index sizes (RAM space) in Fig.~\ref{fig:exp_ram}.
STS and STS-BT indices have the same size, so we only report for STS.  
STS, ALSTJ, and ALSTJ-T indices are larger than those of 3DR, because they store simplified and original trajectories while 3DR only stores original trajectories.  
STS indices take slightly ($\leqslant$8.5\%) more space than ALSTJ-T does. This extra cost comes from the bitmap structure and more index nodes, which is justified by our much faster query performance. ALSTJ indices are smaller than ALSTJ-T's as they do not index the time dimension. 


\textbf{Effectiveness of the join predicate.} 
To show the differences between join queries, we run the original ALSTJ algorithm (with its own join predicate~\cite{join_subtraj_edbt2020}) and collect the resultant trajectory pairs. We find that this misses 36\% of the trajectory pairs returned by STS-Join, confirming the effectiveness of STS-Join in identifying trajectory pairs that may be missed by existing queries.

\vspace{-3mm}
\section{Conclusions} \label{sec:conclusion}
We propose a trajectory join query named STS-Join that returns pairs of trajectories within a given distance threshold 
for a time longer than a given time threshold. 
Unlike existing join queries that require the full trajectories to be similar, 
STS-Join can find trajectories that are similar for a time period while being quite different overall.  
We propose a client-server index named STS-Index and an efficient backtracking-based algorithm and a similarity time period based pruning strategy to support queries while supporting privacy protection. 
Our cost analysis and experiments on real data confirm the superiority of the proposed algorithm -- it outperforms adapted state-of-the-art join algorithms consistently and by up to three orders of magnitude in the query time. 
 

For future work, we plan to extend STS-Join to the road network, and allow temporal shifts in the similarity metric.
We also intend to further study the privacy protection on trajectory similarity queries from a theoretical perspective.

\vspace{-3mm}
\begin{acks}
This work was sponsored in part by the Australian Research Council under the Discovery Project Grant DP180103332 and by the NSF under Grants IIS-18-16889 and IIS-20-41415.
\end{acks}
\vspace{-3mm}

\bibliographystyle{ACM-Reference-Format}
\bibliography{ms}

\end{document}